\newtheorem{theorem}{Theorem}[section]
\newtheorem{lemma}[theorem]{Lemma}
\newtheorem{proposition}[theorem]{Proposition}
\theoremstyle{remark}
\newcommand\figcaption{\def\@captype{figure}\caption}
\newcommand\tabcaption{\def\@captype{table}\caption}
\newcommand{\comment}[1]{\textcolor{black}{\textrm{#1}}}
\DeclareMathAlphabet{\mathpzc}{OT1}{pzc}{m}{it}
\begin{document}
\newcounter{my}
\newenvironment{mylabel}
{
\begin{list}{(\roman{my})}{
\setlength{\parsep}{-1mm}
\setlength{\labelwidth}{8mm}
\usecounter{my}}
}{\end{list}}

\newcounter{my2}
\newenvironment{mylabel2}
{
\begin{list}{(\alph{my2})}{
\setlength{\parsep}{-0mm} \setlength{\labelwidth}{8mm}
\setlength{\leftmargin}{3mm}
\usecounter{my2}}
}{\end{list}}

\newcounter{my3}
\newenvironment{mylabel3}
{
\begin{list}{(\alph{my3})}{
\setlength{\parsep}{-1mm}
\setlength{\labelwidth}{8mm}
\setlength{\leftmargin}{10mm}
\usecounter{my3}}
}{\end{list}}

\title{\bf Favorite-Candidate Voting for Eliminating the Least Popular Candidate in Metric Spaces}

\maketitle

\vspace{-2em}

\begin{center}

\author{Xujin Chen,\textsuperscript{\rm 1,2}
Minming Li,\textsuperscript{\rm 3}
Chenhao Wang\textsuperscript{\rm 1,2,3}\\
\medskip\textsuperscript{\rm 1}Academy of Mathematics and Systems Science, Chinese Academy of Sciences\\
\textsuperscript{\rm 2}School of Mathematical Sciences, University of Chinese Academy of Sciences\\
\textsuperscript{\rm 3}Department of Computer Science, City University of Hong Kong\\
xchen@amss.ac.cn,\;\; minming.li@cityu.edu.hk,\;\; chenhwang4-c@my.cityu.edu.hk
}

\end{center}

%
%
%
\begin{abstract}
We study single-candidate voting embedded in a metric space, where both voters and candidates are points in the space, and the distances between voters and candidates specify the voters' preferences over candidates. In the voting, each voter is asked to submit her {\em favorite} candidate. Given the collection of favorite candidates, a mechanism for eliminating the least popular candidate finds a  {\em committee} containing all candidates but the one to be eliminated.

  Each committee is associated with a social value that is the sum of the costs (utilities) it imposes (provides) to the voters. We design mechanisms for finding a committee to optimize the social value. We measure the quality of a mechanism by its \emph{distortion}, defined as the worst-case ratio between the social value of the committee found by the mechanism and the optimal one. We establish new upper and lower bounds on the distortion of  mechanisms in this single-candidate voting, for both general metrics and well-motivated special cases.
\end{abstract}

\section{Introduction}
\hspace{3mm}
In social choice theory, a \emph{mechanism} (also referred to as a \emph{voting rule}) aggregates the preferences of multiple voters over a set of candidates, and returns  a $k$-element subset of candidates as a winning committee. 
An appealing approach to dealing with social choice problems is embedding the input ``election'' into a \emph{metric space}, \emph{i.e.}, each participant is represented by a point in a metric space, and voters 
prefer candidates that are closer to them to the ones that are further away. This spatial model has very natural interpretations. For example, in a 2-dimensional Euclidean space, each dimension specifies a political issue (such as military or education), and the position of a voter or candidate identifies the extent to which the individual supports the issues. Recently, this model has attracted attentions from AI researchers, see, {\em e.g.,} \cite{anshelevich2015approximating,elkind2017multiwinner,abramowitz2019awareness}.

The mechanisms in many of the works aforementioned ask each voter for a linear order over candidates. On the other hand, one may note that eliciting so much information on the preferences casts a high burden on the selection rules, and often impairs the privacy of voters. 
The \emph{simplicity}, which means that each voter is only required to provide a small amount of information, is often 
a desideratum for good mechanisms. In this paper, we study the \emph{single-candidate vote  mechanisms} (named by \cite{feldman2016voting}), {\em scv mechanisms} for short, that ask each voter to cast a vote of a single candidate.

In addition to the top choices of voters, we further assume that the locations of candidates in the metric space are known to the mechanism, while the voters' \comment{private} locations and numerical preferences are inaccessible, since 
 every political candidate in a typical election should fully announce her opinions on all issues, and thus her location in the space is public information. \comment{For example, in} the facility location scenario, the city authority, who plans to locate some facilities on a street or a plane, predetermines the potential locations of facilities, based on the landscape, resources and distributions of social communities.


 As voters' preferences are specified by their distances to candidates, 
 it is natural to quantify the {\em quality} of a committee by the associated distances. 
  We evaluate the  performance of a mechanism in the standard worst-case analysis benchmark (introduced by Procaccia and Rosenschein \cite{procaccia2006distortion}), which defines the \emph{distortion} of a mechanism to be the worst-case ratio between the quality of a committee selected by this mechanism and that of the optimal committee selected by an omniscient mechanism.

   Previous work was mainly concerned about the single-winner elections. In this paper, we focus on the antithesis, the multi-winner elections that eliminate the least popular candidate, that is, select a committee containing all candidates but one. These can be regarded as \comment{\emph{single-loser} elections, which are} well motivated. For example, some enterprises adopt a \emph{last-out} mechanism in the personnel performance appraisal system, which dismisses the employee with the lowest performance in a department each year. Some voting rules in TV talent shows iteratively eliminate one candidate at each time to obtain the final winners. 


\vspace{2mm}\noindent\textbf{Our Contributions.}

Let $m$ be the number of candidates in the election, and $W$ be the winning committee of size $m-1$ selected by a mechanism. We discuss the distortion of mechanisms under two objectives: minimizing the \emph{social cost} and maximizing the \emph{social utility}. In the former case, each voter takes the distance to $W$ (\emph{i.e.}, the smallest distance between her and a candidate in $W$) as her cost, and the social cost of $W$ is the sum of its distances to all voters. In the latter case, each voter takes her distance to the eliminated candidate (\emph{i.e.}, the one not in $W$) as her utility, and the social utility of $W$ is the total \comment{utility} of voters.

In Section \ref{sec:sc}, we study the distortion of scv mechanisms under the social cost objective.
We prove that if \comment{the exact locations of the candidates are known,} 
 then a simple deterministic mechanism which minimizes the so-called projection distance achieves a distortion of 3, and no deterministic one can do better.  In
other words, we can compute a 3-approximate solution \comment{as long as the input votes are consistent with the true distances, \emph{i.e.,} each vote is indeed a candidate closest to the voter.} 
The most interesting contribution is a randomized scv mechanism with distortion $3-\frac2 m$,  which selects each eligible committee with a carefully designed probability. We prove that no randomized mechanism has a distortion better than $3-\frac 2m$, matching the upper bound. The deterministic and randomized mechanisms also satisfy \emph{strategy-proofness}, guaranteeing 
that each selfish voter always acts truthfully.  Moreover, the  lower bounds $3$ and $3-\frac2m$ hold even if the voters submit a full preference ranking over all the candidates.

 Section \ref{sec:su} focuses on the social utility objective.
We show the \comment{lower bounds 3 and 1.5 for deterministic and randomized mechanisms}, respectively.
While the deterministic mechanism that maximizes the projection distance gives a distortion 3 for general metrics, we investigate randomized scv mechanisms for elections in several widely-studied special spaces, \emph{e.g.}, the simplex (where the distance between any two candidates is the same) and the real line (1-Euclidean space). The simplex setting corresponds to the case when
candidates share no similarities, i.e., when all candidates are equally different from each other, and the real line is also a well-studied and well-motivated special case.

These results are summarized in Table \ref{table}, where LB and UB are shorthands for lower bound and upper bound on the distortion of scv mechanisms.
\begin{table}[htbp]
  \centering
  \caption{A summary of our results}\label{table}

    \begin{tabular}{|p{2cm}<{\centering}|p{3cm}<{\centering}|p{5cm}<{\centering}|}
    \hline
    \small Objective &  \small Deterministic &\small Randomized \bigstrut\\
    \hline
    \multirow{2}[2]{*}{\small Social cost} & \multirow{1}[1]{*}{\small LB: ~$3$ (Prop.\ref{prop:poplower1}) } & {\small LB: $3-\frac2 m$ (Prop.\ref{prop:poplower})} \bigstrut[t]\\

                                    & {\small UB: $3$ (Thm.\ref{th:min})} & {\small UB: $3-\frac2 m$ (Thm.\ref{thm:pow})} \bigstrut[b]\\
    \hline
    \multirow{3}[2]{*}{\small{Social utility}} & \multirow{2}[1]{*}{\small LB: $3$ (Prop.\ref{prop:lower})} & {\small LB: 1.5 (Prop.\ref{prop:lower})} \bigstrut[t]\\

                                    &  {\small UB: $3$ (Thm.\ref{th:max})} & {\shortstack{\small UB: ~$3$-$\frac{4}{m+2}$ (Simplex, Thm.\ref{thm:pmech})\\\small $13/7$ (Line, Thm.\ref{thm:line})}} \bigstrut[b]\\
    \hline
    \end{tabular}
  \label{tab:1}
\end{table}

In Section \ref{sec:add}, we extend our results to a more general setting, where the scv mechanism is required to select a committee of size $k$, for a predetermined integer $k\le m-1$. We prove that the simple idea that optimizes the projection distance can achieve a distortion 3 for both the social cost objective and the social utility objective, and no deterministic mechanism can do better.  Then we conclude this paper with future research directions.

\vspace{2mm}\noindent\textbf{Related Work.}

In social choice theory, Procaccia and Rosenschein \cite{procaccia2006distortion} propose a utilitarian approach \comment{-- the implicit utilitarian voting--} by assuming that voters have latent cardinal utilities and report ordinal preferences induced by them. They measure the performance of popular voting rules by the notion of \emph{distortion}. Subsequently, Caragiannis and Procaccia \cite{caragiannis2011voting}, Oren and Lucier \cite{oren2014online}, Boutilier \emph{et al.} \cite{boutilier2015optimal}, Bhaskar and
Ghosh \cite{bhaskar_et_al:LIPIcs:2018:9926} employ this notion and design selection rules with low distortions.

 Anshelevich {\em et al.} \cite{anshelevich2015approximating} first embed the election into a metric space, in which the participants are points, and the costs are driven by the distances. \comment{They study mechanisms  that know only the voters' preference rankings} 
 over candidates, but not the underlying metric, and output a single winner. \comment{Regarding the objective of minimizing the social cost of the winner, they}
 show the Copeland rule has distortion 5, and \comment{prove} a lower bound 3 for the distortion of deterministic \comment{mechanisms}. Later, Skowron and Elkind \cite{skowron2017social} show that the class of scoring rules and STV have super-constant distortion. The work of \cite{goel2017metric} proves that the ranking pairs rule has distortion at least 5. Recently,  Munagala and Wang \cite{munagala2019improved} improve the distortion to 4.236, using a weighted tournament rule.

 \comment{In addition to} 
 deterministic rules, randomized rules have also been considered. Random dictatorship that randomly selects the top choice of one of the $n$ voters gets distortion $3-2/n$ \cite{feldman2016voting,anshelevich2017randomized}. Feldman {\em et al.} \cite{feldman2016voting} consider scv mechanisms and strategy-proofness in the metric setting, and propose a 2-distortion mechanism on the real line.  The work of \cite{gross2017vote} proposes \comment{a very simple} mechanism 
 that randomly asks voters for their favorite candidates until two voters agree, achieving low distortion and satisfying some normative properties.

 The most related \comment{setting to ours appears in} 
 \cite{anshelevich2018ordinal}, where the candidates' locations are additionally assumed to be known. 
 With this extra location information,  they break the best-known upper bound 4.236 mentioned above and present a deterministic 3-distortion scv mechanism for single-winner election.

\section{Model}\label{model1}

Let $\Omega=(S,d)$ be a metric space, where $S$ is the space and $d:S\times S\rightarrow \mathbb R_+$ is the metric. The {\em distance} between $w\in S$ and $V\subseteq S$ is defined as $d(w,V):=\min_{v\in V}d(w,v)$. Let $N=\{1,\ldots,n\}$ be the set of \emph{voters} (agents), each of whom is located at a private point in $S$. The location $x_i$ of voter $i\in N$ is her \emph{type}, and the \emph{location profile} of all voters is $\mathbf x=(x_1,\ldots,x_n)$.  Let $M=\{y_1,\ldots,y_m\}$ be the set of \emph{candidates} (alternatives), each of whom is located at a public point in $S$. We refer to $y_j$ as the $j$-th candidate and as her location interchangeably.

The voter prefers the closer candidate, 
 and the nearest candidate is the favorite. Each voter $i\in N$ is asked to submit a single nearest candidate, called her \emph{action} and denoted by $a_i\in M$. The collection of voters' actions is the \emph{action profile} $\mathbf a=(a_1,\ldots,a_n)$. An \emph{election} in the social choice problem under consideration is a triple $\Gamma=(\Omega,M,\mathbf a)$. We call a location profile $\mathbf x$ \emph{consistent} with election $\Gamma$, if each voter's action reveals her real preference, that is, {$a_i\in\arg\min_{y\in M}d(x_i,y)$,} for every $i\in N$. Denote by $\chi(\Gamma)$ the set of location profiles consistent with $\Gamma$.

We are concerned with \emph{mechanisms} that, given an election $\Gamma=(\Omega,M,\mathbf a)$, select a \emph{committee} (subset of $M$) of cardinality $m-1$ as winners. It is assumed that the mechanisms have full information on the metric space $\Omega$ and  candidate locations $M$, but they do not know the location profile of voters. \comment{Associate each $y\in M$ with the potential committee $M_y:=M\setminus\{y\}$. Let  $K=\{M_y:y\in M\}$ denote the set of potential committees.}  A \emph{randomized} mechanism is a function $f$ that maps every action profile $\mathbf a\in M^n$ to a random committee $f(\mathbf a)$ that follows some probability distribution over the potential committees in $K$. A \emph{deterministic} mechanism $f$ simply selects  a specific committee $f(\mathbf a)\in K$ with probability 1.

We investigate the performance of mechanisms from the utilitarian perspective, which involves the objectives of minimizing the social cost and maximizing the social utility, respectively.

\vspace{2mm}\textbf{The social cost objective.}  
Given location profile $\mathbf x=(x_i)_{i\in N}$ and committee $Y\in K$, the {\em cost} of voter $i\in N$ is the distance to the nearest winner, \emph{i.e.,} $d(x_i,Y)$. The {\em social cost} of $Y$, denoted as $SC(Y,\mathbf x)$ or $SC(Y)$ for short, equals $\sum_{i\in N}d(x_i,Y)$. We use $OPT_c(\mathbf x)$ to denote the social cost of an optimal committee selected by an omniscient mechanism, \emph{i.e.}, $OPT_c(\mathbf x)=\min_{Y\in K}SC(Y,\mathbf x)$. The \emph{distortion} of a (randomized) mechanism $f$ on an election $\Gamma=(\Omega,M,\mathbf a)$ is
 $$dist(f,\Gamma)=\sup_{\mathbf x\in \chi(\Gamma)}\frac{\mathbf E[SC(f(\mathbf a),\mathbf x)]}{OPT_c(\mathbf x)}.$$
 In other words, it is the worst-case --- over the location profiles consistent with $\Gamma$ 
 --- ratio between the expected social cost of the committee selected by the mechanism and the optimal social cost.

\textbf{The social utility objective.} 
Given location profile $\mathbf x$ and committee $M_y$, the utility of voter $i$ equals \comment{the $d(x_i,y)$ to the loser $y$}. The {\em social utility} of $M_y$, denoted as $SU(M_y,\mathbf x)$ or $SU(M_y)$ for short, equals $\sum_{i\in N}d(x_i,y)$. The optimal social utility is 
$OPT_u(\mathbf x)=\max_{Y\in K}SU(Y,\mathbf x)$, and the \emph{distortion} of a (randomized) mechanism $f$ on election $\Gamma$ is
 $$dist(f,\Gamma)=\sup_{\mathbf x\in \chi(\Gamma)}\frac{OPT_u(\mathbf x)}{\mathbf E[SU(f(\mathbf a),\mathbf x)]}.$$

For either of the objectives, we define the {\em distortion} of a mechanism $f$ as $Dist(f)=\sup_{\Gamma}dist(f,\Gamma)$ by taking the worst case over elections. We call $f$ an {\em $r$-distortion mechanism} if $Dist(f)\le r$.

 \vspace{2mm}\textbf{Strategy-proofness.} As in many previous works on social choice, we evaluate the quality of a mechanism under the assumption that the {underlying} location profile is always consistent with the elections, \emph{i.e.}, the voters act truthfully and submit their nearest candidates. Nevertheless, possibly some voter may use a strategy (that leads to an action and consequently an election with which the location profile may not be consistent) to be better off. A mechanism is \emph{strategy-proof}, if the truth-telling strategy is always optimal for each voter, that is, voting for {\em any} one of the nearest candidates can {\em always} optimize her (expected) cost or utility, regardless of the actions of others.

\section{Mechanisms for Minimum Social Cost}\label{sec:sc}

This section focuses on the objective of minimizing the social cost. We first show the lower bounds on distortion, and propose both deterministic and randomized mechanisms that match the lower bounds.

\subsection{Lower Bounds}\label{sec:sizek}

We prove lower bounds on the distortion of both deterministic and randomized mechanisms by constructing election instances. Our construction is based on the well-known worst case \comment{of} single-winner election \cite{anshelevich2015approximating,feldman2016voting}, in which two candidates locate at 0 and 2 on the real line respectively, and \comment{each receive a vote.} 
We extend it to our setting by adding $m-2$ very far candidates,  each of whom also receives a vote. Then any mechanism with guaranteed performance must weed out \comment{either the candidate  locating at 0 or the one at 2; while} 
either option results in a distortion 3.

\begin{proposition}\label{prop:poplower1}
For any $m\ge2$ and the social cost objective, the distortion of any deterministic scv mechanism cannot be smaller than~$3$. 
\end{proposition}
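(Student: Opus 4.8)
The plan is to construct a concrete family of elections on the real line that forces any deterministic scv mechanism to incur distortion arbitrarily close to (in fact at least) $3$, following the sketch already described in the text. The key idea is to make the mechanism's decision ``blind'': present an action profile in which two candidates each receive exactly one vote and are symmetric to each other, so that the mechanism has no way to distinguish between eliminating one versus the other, yet the two consistent location profiles (one for each choice) are adversarially chosen so that whichever candidate the mechanism drops is the wrong one.

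Concretely, I would place candidates $y_1$ at $0$ and $y_2$ at $2$ on the real line, and the remaining $m-2$ candidates $y_3,\dots,y_m$ at a very large coordinate $L$ (or pairwise far apart near $L$), each receiving exactly one vote from a voter co-located with it. For the two ``interesting'' candidates, I would put one voter voting for $y_1$ and one voter voting for $y_2$. A deterministic mechanism, receiving this fixed action profile $\mathbf a$, must commit to one committee $M_y=M\setminus\{y\}$. The far candidates $y_3,\dots,y_m$ can never be the ones eliminated without catastrophic cost (their voters would then be charged distance $\approx L$), so any mechanism with bounded distortion must eliminate either $y_1$ or $y_2$. By the symmetry of the construction, I will then exhibit two location profiles, both consistent with $\Gamma$: in one, the voter who voted $y_1$ actually sits at a point that makes eliminating $y_1$ costly relative to the optimum, and symmetrically in the other. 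The standard single-winner gadget places the two real-line voters so that the eliminated candidate's voter is at distance $2$ from the surviving one while the optimum is $2/3$ of that, yielding the ratio $3$.

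The main step is the consistency bookkeeping: I must verify that each chosen location profile lies in $\chi(\Gamma)$, i.e., that every voter's submitted action is genuinely a nearest candidate under that profile. For the far voters this is immediate since they are co-located with their candidates. For the two real-line voters, I would place the voter who votes $y_1$ at a point whose distance to $y_1$ is no larger than its distance to $y_2$ (and to the far candidates), and dually for the $y_2$ voter; the classical choice puts these voters at positions making the midpoint argument work. Then, if the mechanism eliminates $y_1$, I invoke the profile in which the $y_1$-voter is penalized, computing $SC$ of the selected committee versus $OPT_c$ and showing the ratio is at least $3$; if it eliminates $y_2$, I use the mirror-image profile. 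Since the mechanism is deterministic it picks one of these, and the corresponding adversarial profile drives the distortion to $3$.

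The hard part will be arranging the numerical placements on the line so that both consistent profiles simultaneously achieve the ratio $3$ exactly (rather than merely a constant bounded away from $1$) while keeping the far-candidate votes truthful and making their elimination provably suboptimal for any competitive mechanism. I expect the far candidates to require only that $L$ be taken large enough that eliminating any $y_j$ with $j\ge 3$ gives unbounded distortion, which cleanly forces the mechanism into the two-candidate subproblem; the delicate point is then to reuse the known tight single-winner instance of \cite{anshelevich2015approximating,feldman2016voting} verbatim for the $\{y_1,y_2\}$ pair, confirming that adding the far gadget does not disturb its tightness. Once that reduction is in place, the bound $3$ follows directly from the established single-winner lower bound.
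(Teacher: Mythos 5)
Your proposal follows essentially the same route as the paper: the standard two-candidate gadget at $0$ and $2$ with one vote each, padded with $m-2$ distant candidates each receiving a co-located vote, so that bounded distortion forces the mechanism to eliminate $y_1$ or $y_2$, after which the adversary picks between two consistent profiles (one with the tied voter at the midpoint $1$) to make whichever choice the mechanism made cost $3$ against an optimum of $1$. The paper's proof is exactly this, with far candidates at $L, 2L, \ldots, (m-2)L$.

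One concrete warning: of the two placements you offer for the far candidates, the first one --- all of them at a single coordinate $L$ --- actually breaks the argument. If $y_3,\dots,y_m$ coincide, eliminating one of them is free: its voter is still at distance $0$ from the remaining co-located candidates, so the mechanism can keep both $y_1$ and $y_2$, and on either adversarial profile its social cost equals the optimum (the midpoint voter pays $1$ either way), giving distortion $1$. The far candidates must be pairwise far apart (your parenthetical alternative, and the paper's choice of $L,2L,\dots,(m-2)L$), so that eliminating any of them charges its voter an unboundedly large distance relative to $OPT_c=1$; only then is the mechanism genuinely confined to the two-candidate subproblem. Also, your numerical aside that ``the optimum is $2/3$ of that'' is garbled --- the correct accounting is: the dropped candidate's voter pays $2$, the midpoint voter pays $1$ under any committee, so the selected committee costs $3$ while the optimum costs $1$ --- but this does not affect the soundness of the overall plan.
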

\begin{proof}
Consider an election $\Gamma$ in $\mathbb R$, where $m$ candidates are located at $y_1=0,y_2=2,y_3=L,y_4=2L,\ldots,y_m=(m-2)L$ for a large number $L$, and the action profile of $n=m$ voters is $\mathbf a=(0,2,L,2L,\ldots,(m-2)L)$.

It is easy to see that any mechanism $f$ with bounded distortion must eliminate either $y_1$ or $y_2$. If $y_1\in f(\mathbf a)$, then for the location profile $\mathbf x=(1,2,L,2L,\ldots,(m-2)L)\in\chi(\Gamma)$, we have $SC(f(\mathbf a),\mathbf x)=3$, and  $OPT_c(\mathbf x)=1$ (realized by the optimal committee $M_{y_1}$), indicating the distortion at least 3. If $y_2\in f(\mathbf a)$, the same bound holds for location profile $(0,1,L,2L,\ldots,(m-2)L)$.
\end{proof}

Although the  example constructed above can provide a lower bound 2 for the distortion of randomized scv mechanism, we prove a better lower bound in the following.

\begin{proposition}\label{prop:poplower}
For the social cost objective, the distortion of any randomized scv mechanism cannot be smaller than $3-\frac 2 m$.
\end{proposition}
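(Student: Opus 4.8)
The plan is to exhibit a single election $\Gamma=(\Omega,M,\mathbf a)$ whose action profile is completely symmetric across candidates, together with $m$ location profiles consistent with it, arranged so that for each candidate there is one profile in which that candidate is the \emph{unique} best one to eliminate while eliminating any other candidate is three times as costly. Since a randomized mechanism sees only $\mathbf a$, it commits to a single distribution over the $m$ potential committees; writing $p_j=\mathbf{Pr}[f(\mathbf a)=M_{y_j}]$ for the probability of eliminating $y_j$, we have $\sum_{j}p_j=1$, and an averaging argument will force $\min_j p_j\le 1/m$, which is exactly what yields the bound $3-\frac2m$.

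For the construction I would take $\Omega$ to be the star (tree) metric with a center $c$ and the $m$ candidates as leaves, setting $d(c,y_j)=1$ for all $j$ and hence $d(y_i,y_j)=2$ for $i\ne j$. Let $n=m$ and fix $\mathbf a=(y_1,\dots,y_m)$, so every candidate receives exactly one vote. For each $j$ define the location profile $\mathbf x^{(j)}$ by placing voter $j$ at the center $c$ and every other voter $i\ne j$ exactly at its candidate $y_i$. Each $\mathbf x^{(j)}$ is consistent with $\Gamma$: voter $i\ne j$ sits on $y_i$, and voter $j$ at $c$ is at distance $1$ from \emph{every} candidate, so $y_j$ is (tied for) nearest and $a_j=y_j$ is a legitimate favorite. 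A direct computation then gives two cost values: eliminating $y_j$ forces only the central voter to pay, at distance $1$, so $SC(M_{y_j},\mathbf x^{(j)})=1$; eliminating any $y_k$ with $k\ne j$ makes the central voter pay $1$ and additionally forces voter $k$ (sitting on $y_k$) to reroute at distance $2$, so $SC(M_{y_k},\mathbf x^{(j)})=3$. Hence $OPT_c(\mathbf x^{(j)})=1$, attained only by eliminating $y_j$.

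It then remains to assemble the estimate. On profile $\mathbf x^{(j)}$ the expected social cost of $f$ is $p_j\cdot 1+\sum_{k\ne j}p_k\cdot 3=3-2p_j$, so $dist(f,\Gamma)\ge \mathbf E[SC(f(\mathbf a),\mathbf x^{(j)})]/OPT_c(\mathbf x^{(j)})=3-2p_j$ for every $j$. Choosing an index $j^\star$ with $p_{j^\star}=\min_j p_j\le 1/m$ (possible since the $p_j$ are nonnegative and sum to $1$) gives $dist(f,\Gamma)\ge 3-\frac2m$, as claimed. I expect the only delicate point to be arranging that all $m$ location profiles are simultaneously consistent with the \emph{same} action profile $\mathbf a$; this is precisely what the equidistant center buys us, since the slack voter placed at $c$ is indifferent among all candidates and may therefore legitimately cast the vote prescribed by $\mathbf a$. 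A minor check is that no committee beats the value $1$ on $\mathbf x^{(j)}$, which is immediate from the two computed cost values. (As a sanity check, for $m=2$ the star degenerates to the points $0,1,2$ on the line and reproduces the gadget of Proposition~\ref{prop:poplower1}, with bound $3-\tfrac22=2$.)
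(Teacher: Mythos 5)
Your proof is correct and uses essentially the same construction as the paper: the simplex metric with all candidates pairwise at distance $2$, one vote per candidate, and a voter placed at a point equidistant (distance $1$) from all candidates, yielding costs $1$ versus $3$ and the bound $3-\frac{2}{m}$ via the pigeonhole fact that some committee is selected with probability at most $\frac{1}{m}$. The only difference is presentational — the paper invokes symmetry to fix w.l.o.g.\ the candidate eliminated with probability at most $\frac1m$ and uses a single location profile, whereas you carry all $m$ symmetric profiles and select the right one at the end (and you also realize the metric explicitly as a star), which is the same argument executed slightly more carefully.
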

\begin{proof}
Consider an election \comment{$\Gamma=(\Omega,M,\mathbf a)$ with} $d(y_i,y_j)=2$ for any \comment{pair of distinct} candidates $y_i,y_j\in M$. There are $n=m$ \comment{voters}, and the action profile is $\mathbf a=(y_1,y_2,\ldots,y_m)$, that is, each candidate receives a vote from one \comment{voter. Since there are in total $m$ potential committees, any} randomized mechanism $f$ must \comment{select} some committee $M_y$ with a probability no more than $\frac 1 m$. \comment{By symmetry,} 
we can assume w.l.o.g. that \comment{$\mathbf{Pr}[f(\mathbf a)=M_{y_m}]\le\frac1m$.} 

Now consider the location profile $\mathbf x=(y_1,y_2,\ldots,y_{m-1},x_m)$, where the point $x_m$ is at the same distance $d(x_m,y_i)=1$ from every candidate $y_i\in M$.   Obviously, suitable choice of $\Omega,M $ and $x_m$ can fulfill all the conditions ({\em i.e.}, the distances specified satisfy the metric \comment{condition}), and guarantees that $\mathbf x$ is consistent with $\Gamma$. (Figure \ref{fig:simplex} depicts an example \comment{for} $m=3$.)
The optimal committee is $M_{y_m}$ \comment{with optimal social cost} $OPT_c(\mathbf x)=d(x_m,M_{y_m})=1$\comment{, while any other committee $M_{y_i}$ with $i\le m-1$ has a social cost at least $d(x_m,M_{y_i})+d(y_i,M_{y_i})=1+2=3$. Thus, the expected social cost of the random committee $f(\mathbf a)$   is  $\mathbf{E}[SC(f(\mathbf a),\mathbf x)]\ge \frac1 mOPT_c(\mathbf x)+(1-\frac1 m)3=3-\frac2 m$, showing that  the distortion of $f$} is at least $3-\frac 2 m$.
\end{proof}

\vspace{-5mm}
\begin{figure}[htpb]
\begin{center}
\includegraphics[scale=0.5]{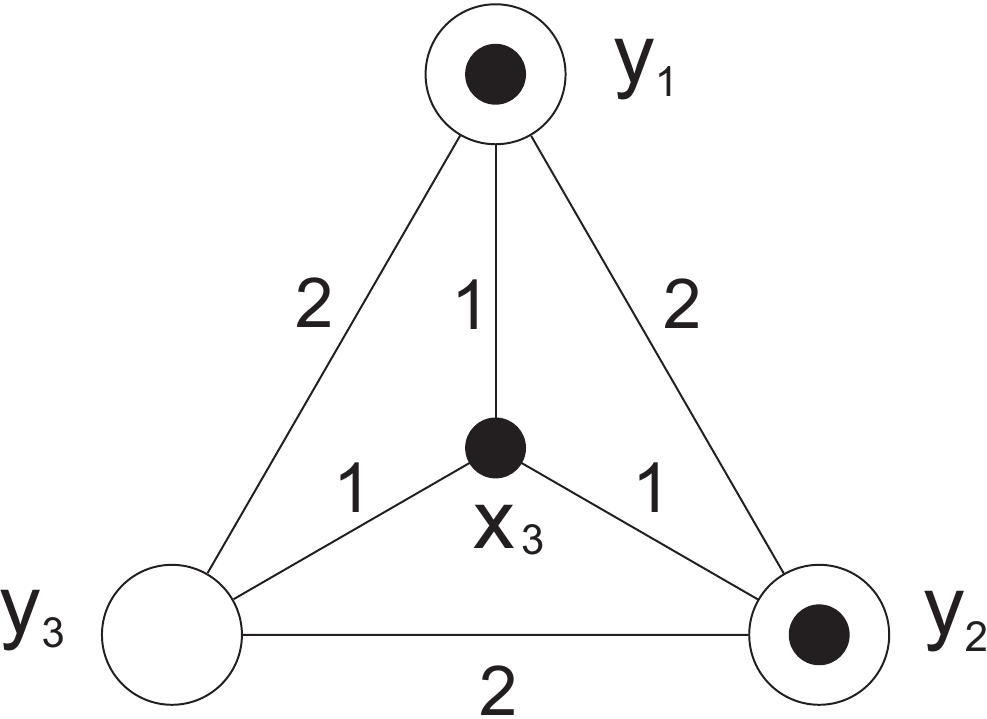}
\caption{\label{fig:simplex}Three candidates are indicated by hollow circles,  three voters are indicated by solid disks. Each candidate receives a vote. The numbers near edges indicate the distances, which are not Euclidean. The point $x_3$ is at distance 1 from every candidate, and the optimal solution eliminates $y_3$.}
\end{center}
\end{figure}

\comment{It is worth pointing out that the two election examples constructed in the proofs of Propositions \ref{prop:poplower1} and \ref{prop:poplower}} can be applied to the election that asks each voter to submit a preference ranking. Thus the lower bounds  {in these two proportions} also hold for \comment{the mechanisms that aggregate voters' rankings over candidates}.

\subsection{Projection Mechanism}

Given an action profile $\mathbf a=(a_i)_{i\in N}$, \comment{it can be viewed as a projection of the location profile of voters to the location set of candidates. For any subset $W\subseteq M$, we define its \emph{projection distance} w.r.t. $\mathbf a$ as}
$pd_{\mathbf a}(W):=\sum_{i\in N}d(a_i,W)$. In the remainder of this paper, we use $d(i,V)$ instead of $d(x_i,V)$ for $i\in N$ and $V\subseteq S$, when the context is clear. Now we are ready to present a deterministic mechanism which ensures  distortion 3  matching the lower bound in Proposition \ref{prop:poplower1}, by selecting a committee that minimizes the projection distance. \\

\hspace{-3mm}\textbf{Mechanism 1 \comment{\sc(Min-Projection-Distance)}.} Given an election $\Gamma=(\Omega,M,\mathbf a)$,  mechanism $f$ deterministically outputs \comment{a committee} $f(\mathbf a)$ with the smallest projection distance, that is, $f(\mathbf a)\in\mathrm{argmin}_{M_y:y\in M}pd_{\mathbf a}(M_y)$; ties are broken  arbitrarily.\\

The \comment{spirit} of this mechanism is treating the action of each voter as her location.

\begin{theorem}\label{th:min}
\comment{\sc Min-Projection-Distance} is a deterministic, {strategy-proof, polynomial-time and} $3$-distortion scv mechanism for the social cost objective.
\end{theorem}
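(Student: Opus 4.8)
The plan is to first put the mechanism in closed form and then verify the three asserted properties, with the distortion bound being the only substantial one. Since each action $a_i$ lies in $M$, the summand $d(a_i,M_y)$ is zero unless $a_i=y$, in which case it equals $d(y,M_y)=\min_{z\ne y}d(y,z)$. Writing $n_y:=|\{i\in N:a_i=y\}|$ for the number of votes cast for $y$, this yields the identity $pd_{\mathbf a}(M_y)=n_y\,d(y,M_y)$. Hence \textsc{Min-Projection-Distance} eliminates the candidate $y^\ast$ minimizing the product $n_y\,d(y,M_y)$, a quantity computable directly from the votes and the public candidate locations; scanning all $m$ candidates gives a polynomial-time implementation.

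For strategy-proofness I would use that a voter attains her minimum possible cost $d(i,M)$ whenever some candidate other than her favorite is eliminated, since only one candidate is removed; she is harmed only if her unique favorite is the eliminated one (if she has two or more nearest candidates she is always indifferent). Fix voter $i$ with unique favorite $a_i$ and fix the other votes. Truthful voting adds one to $n_{a_i}$, so it makes $a_i$'s product $n_{a_i}\,d(a_i,M_{a_i})$ as large as possible, whereas any deviation decreases this product by $d(a_i,M_{a_i})>0$ and weakly increases every competing product. Thus if $a_i$ is eliminated under truthful voting, after any deviation $a_i$ becomes the strict minimizer and is still eliminated, so deviating cannot rescue her favorite. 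Truth-telling is therefore weakly dominant.

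The core is the distortion bound. Let $W=M_{y^\ast}$ be the returned committee and $O=M_o$ an optimal one; if $y^\ast=o$ the ratio is $1$, so assume $y^\ast\ne o$, and set $\delta_i:=d(i,M)=d(i,a_i)$. Every voter with $a_i\ne y^\ast$ keeps a nearest candidate in $W$, so $d(i,W)=\delta_i$; therefore $$SC(W)=\sum_{i\in N}\delta_i+\sum_{i:\,a_i=y^\ast}\bigl(d(i,M_{y^\ast})-\delta_i\bigr).$$ For $a_i=y^\ast$ the triangle inequality through the candidate nearest to $y^\ast$ gives $d(i,M_{y^\ast})\le\delta_i+d(y^\ast,M_{y^\ast})$, so the second sum is at most $n_{y^\ast}\,d(y^\ast,M_{y^\ast})$. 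By the mechanism's optimality $n_{y^\ast}\,d(y^\ast,M_{y^\ast})\le n_o\,d(o,M_o)$, whence $SC(W)\le\sum_i\delta_i+n_o\,d(o,M_o)$. Now $\sum_i\delta_i\le SC(O)$ because $d(i,O)\ge d(i,M)=\delta_i$. For the remaining term I would apply a reverse triangle inequality on the voters with $a_i=o$: there $d(o,M_o)\le d(i,o)+d(i,M_o)=\delta_i+d(i,O)$, so $d(i,O)\ge d(o,M_o)-\delta_i$; summing over these $n_o$ voters gives $n_o\,d(o,M_o)\le\sum_{i:\,a_i=o}d(i,O)+\sum_{i:\,a_i=o}\delta_i\le 2\,SC(O)$. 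Combining the two estimates yields $SC(W)\le 3\,SC(O)=3\,OPT_c(\mathbf x)$, and the worst case over consistent profiles and elections gives $Dist(f)\le 3$; tightness follows from Proposition~\ref{prop:poplower1}.

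I expect the main difficulty to be precisely the bridge between the surrogate objective and the true objective: the mechanism minimizes $n_y\,d(y,M_y)$, a function of the reported votes and public locations, yet the distortion must be controlled against the unknown true social cost over every consistent profile. The resolution is the pair of triangle-inequality estimates above — one upper-bounding the extra cost $W$ incurs on voters whose favorite it eliminates by $n_{y^\ast}\,d(y^\ast,M_{y^\ast})$, the other lower-bounding $SC(O)$ by $\tfrac12\,n_o\,d(o,M_o)$ — so that the mechanism's inequality $n_{y^\ast}\,d(y^\ast,M_{y^\ast})\le n_o\,d(o,M_o)$ lets the common quantity cancel. Making both directions align, and confirming the constant is exactly $3$ rather than something larger, is the crux.
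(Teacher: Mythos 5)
Your proposal is correct and takes essentially the same route as the paper: your distortion chain ($SC(W)\le\sum_i\delta_i+pd_{\mathbf a}(W)\le\sum_i\delta_i+pd_{\mathbf a}(O)\le SC(O)+2\,SC(O)$) is exactly the paper's decomposition — triangle inequality through the votes, the consistency bound $d(x_i,a_i)\le d(x_i,Y^*)$, and the mechanism's minimality of projection distance — merely rewritten via the closed-form identity $pd_{\mathbf a}(M_y)=n_y\,d(y,M_y)$. Your strategy-proofness argument (a deviation strictly lowers the favorite's score and weakly raises all competing scores, so an eliminated favorite stays eliminated, while a surviving favorite already gives minimum cost) is the same monotonicity fact the paper establishes by contradiction on projection distances, and your explicit treatment of tie-breaking and voters with several nearest candidates is, if anything, slightly more careful than the paper's.
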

\begin{proof}
 The polynomial-time computability is straightforward since the number of possible committees is $|K|=m$.

For the strategy-proofness,  we show that the truth-telling strategy always gives each voter a minimum cost. Suppose action $a_i$ is a nearest candidate of voter $i$, and $a_i'\in M\setminus\{a_i\}$ is another arbitrary action. Given the actions $\mathbf a_{-i}$ of other voters, consider the action profiles $\mathbf a=(a_i,\mathbf a_{-i})$ and $\mathbf a'=(a_i',\mathbf a_{-i})$. The output of the mechanism is $f(\mathbf a)=Y\in\mathrm{argmin}_{M_y}pd_{\mathbf a}(M_y)$ and $f(\mathbf a')=Y'\in\mathrm{argmin}_{M_y}pd_{\mathbf a'}(M_y)$. We only need to consider the case where $Y\neq Y'$. If $a_i\in Y$, the cost of voter $i$ is minimized when she tells the truth. So we assume  $a_i\notin Y$, which along with $k=m-1$ implies $a_i\in Y'$. If $a_i'\in Y'$, the projection distance of $Y$ on $\mathbf a'$ is \comment{$pd_{\mathbf a'}(Y)=pd_{\mathbf a}(Y)-d(a_i,Y)<pd_{\mathbf a}(Y)\le pd_{\mathbf a}(Y')$}, and the projection distance of $Y'$ on $\mathbf a'$ is $pd_{\mathbf a'}(Y')=pd_{\mathbf a}(Y')$ since both $a_i$ and $a_i'$ are in $Y'$. So we have $pd_{\mathbf a'}(Y)<pd_{\mathbf a'}(Y')$, which contradicts the \comment{selection} rule of the mechanism, and \comment{reduces to the case of   $a_i\in Y'\setminus Y$ and $a_i'\in Y\setminus Y'$. Now} we have $pd_{\mathbf a'}(Y')>pd_{\mathbf a}(Y')\ge pd_{\mathbf a}(Y)> pd_{\mathbf a'}(Y)$, which also contradicts the \comment{selection} rule. Therefore, voter \comment{$i$'s} cost when reporting $a_i$ is always no more than \comment{her} cost when reporting any $a_i'$, which proves the strategy-proofness.

Given election $\Gamma$ and any consistent location profile $\mathbf x=(x_i)_{i\in N}\in\chi(\Gamma)$, let $Y^*$ be the optimal committee, and $Y$ be the output by the mechanism. Then 
\begin{align*}
\frac{SC(Y)}{SC(Y^*)}&=\frac{\sum_{i\in N}d(x_i,Y)}{\sum_{i\in N}d(x_i,Y^*)}\\
&\le\frac{\sum_{i\in N}d(x_i,a_i)}{\sum_{i\in N}d(x_i,Y^*)}+\frac{\sum_{i\in N}d(a_i,Y)}{\sum_{i\in N}d(x_i,Y^*)}\\
&\le 1+\frac{\sum_{i\in N}d(a_i,Y)}{\sum_{i\in N}d(x_i,Y^*)}.
\end{align*}
For every $i\in N$, recalling from the consistency that $d(x_i,a_i)=\min_{y\in M}d(x_i,y)\le d(x_i,Y^*)$, we have  {$2d(x_i,Y^*)\ge d(x_i,Y^*)+d(x_i,a_i)\ge d(a_i,Y^*)$}. Therefore
$$\frac{SC(Y)}{SC(Y^*)}\le 1+\frac{\sum_{i\in N}d(a_i,Y)}{\frac12\sum_{i\in N}d(a_i,Y^*)}=1+\frac{2pd_{\mathbf a}(Y)}{pd_{\mathbf a}(Y^*)}\le 3,$$
where the last inequality is guaranteed by the selection rule of the mechanism.
\end{proof}


\subsection{Power-Proportionality Mechanism}\label{sec:sizem-1}

 Inspired by \cite{anshelevich2017randomized}, we establish in the following,   for any given randomized scv mechanism and  location profile, an upper  bound on the ratio between the expected social cost of the committee selected by the mechanism, and the optimal social cost. With the help of this \comment{upper} bound, we design a randomized \comment{scv} mechanism, and prove its strategy-proofness and distortion  \comment{(which matches} the lower bound in Proposition \ref{prop:poplower}).

 Before presenting 
  the formal description of the upper bound, we make a partition of the voter set according to voters' actions. Given an action profile \comment{$\mathbf a=(a_i)_{i\in N}$}, for each candidate $y\in M$, \comment{let $N_{\mathbf a,y}=\{i\in N|a_i=y\}$ denote}   the subset of \comment{voters whose actions are  $y$. Then $(N_{\mathbf a,y})_{y\in M}$ forms a partition of $N$.}

 \begin{lemma}\label{lem:simi}
Given a randomized scv mechanism and an election $\Gamma=(\Omega,M,\mathbf a)$, 
suppose the probability that the mechanism selects each $M_{y}\in K$ as winners is $P(M_{y})$. Then, for any 
  location profile $\mathbf x\in\chi(\Gamma)$ and  any optimal committee $M_{y^*}$ with $y^*\in M$, the following holds:
   \begin{align}
& \frac{\sum_{y\in M}P(M_{y})SC(M_{y})}{SC(M_{y^*})} \label{eq:disss}\\
\le& 1+\frac{2\sum_{y\neq y^*}P(M_{y})|N_{\mathbf a,y}|d(y,M_{y})}{|N_{\mathbf a,y^*}|d(y^*,M_{y^*})}. \nonumber
 \end{align}
 \end{lemma}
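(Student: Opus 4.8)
The plan is to express every committee's social cost relative to the common baseline $C:=\sum_{i\in N}d(x_i,a_i)$, the cost incurred when each voter is served by her own favorite candidate. Since under consistency $a_i$ is a globally nearest candidate, eliminating a candidate $y$ changes a voter's served distance only when $y$ is exactly her favorite, i.e. only for $i\in N_{\mathbf a,y}$; every other voter retains $d(x_i,M_y)=d(x_i,a_i)$. I would therefore first record the identity $SC(M_y)=C+\sum_{i\in N_{\mathbf a,y}}\bigl(d(x_i,M_y)-d(x_i,y)\bigr)$, which already gives the monotonicity $C\le SC(M_{y^*})$ and localizes the whole analysis to the ``re-routed'' voters in $N_{\mathbf a,y}$.

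The crux is two estimates on these bracketed re-routing terms. For the upper side, fix $y$ and for each $i\in N_{\mathbf a,y}$ route the voter to the candidate of $M_y$ nearest to $y$; the triangle inequality gives $d(x_i,M_y)\le d(x_i,y)+d(y,M_y)$, hence $SC(M_y)\le C+|N_{\mathbf a,y}|\,d(y,M_y)$. For the lower side I need a bound of the form $SC(M_{y^*})\ge\tfrac12|N_{\mathbf a,y^*}|\,d(y^*,M_{y^*})$, and this is where both hypotheses must be used at once: for $i\in N_{\mathbf a,y^*}$, consistency yields $d(x_i,M_{y^*})\ge d(x_i,y^*)$ (since $y^*$ is her nearest candidate), while the triangle inequality around $y^*$ yields $d(x_i,M_{y^*})\ge d(y^*,M_{y^*})-d(x_i,y^*)$. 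Averaging the two cancels the unknown term $d(x_i,y^*)$ and gives $d(x_i,M_{y^*})\ge\tfrac12 d(y^*,M_{y^*})$ for every such $i$, so summing over $N_{\mathbf a,y^*}$ produces the desired lower bound.

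With these in hand the assembly is routine. Writing $D:=SC(M_{y^*})$, I split the expected cost as $\sum_{y}P(M_y)SC(M_y)=P(M_{y^*})D+\sum_{y\ne y^*}P(M_y)SC(M_y)$, bound each $SC(M_y)$ with $y\ne y^*$ by $D+|N_{\mathbf a,y}|\,d(y,M_y)$ (combining the upper estimate with $C\le D$), and use $\sum_{y}P(M_y)=1$ to collapse all the $D$-terms. This yields $\sum_{y}P(M_y)SC(M_y)\le D+\sum_{y\ne y^*}P(M_y)|N_{\mathbf a,y}|\,d(y,M_y)$. Finally the lower estimate says $\tfrac{2D}{|N_{\mathbf a,y^*}|\,d(y^*,M_{y^*})}\ge1$, so I may freely insert this factor in front of the second sum; dividing through by $D$ then gives exactly the claimed inequality~\eqref{eq:disss}.

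The main obstacle is the lower bound on $SC(M_{y^*})$: neither ingredient alone suffices — the consistency inequality loses the $d(y^*,M_{y^*})$ scale entirely, while the triangle-inequality bound can be vacuous when $d(x_i,y^*)$ is large — and it is precisely their combination that forces the factor $\tfrac12$, which is in turn what produces the constant $2$ appearing in the statement. Everything else reduces to bookkeeping with $\sum_{y}P(M_y)=1$ and the monotonicity $C\le D$.
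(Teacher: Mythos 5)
Your proof is correct and follows essentially the same route as the paper's: the same split of voters into $N_{\mathbf a,y}$ versus the rest, the same triangle-inequality re-routing bound $SC(M_y)\le SC(M_{y^*})+|N_{\mathbf a,y}|\,d(y,M_y)$, the same factor-$\tfrac12$ lower bound $SC(M_{y^*})\ge\tfrac12|N_{\mathbf a,y^*}|\,d(y^*,M_{y^*})$ obtained by combining consistency with the triangle inequality, and the same assembly via $\sum_y P(M_y)=1$. Your intermediate baseline $C=\sum_i d(x_i,a_i)$ is only a cosmetic repackaging of the paper's direct voter-by-voter comparison $d(x_i,M_y)\le d(x_i,M_{y^*})$ for voters with $a_i\ne y$.
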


  \begin{proof} For each voter $i\in N$, note that $i\in N_{\mathbf a,a_i}$ and (from the consistency of $\mathbf x$) that $a_i$ is a nearest candidate for $i$.
 If $y\in M\setminus\{a_i\}$, {\em i.e.}, $i\in N\setminus N_{\mathbf a,y}$, \comment{then $a_i\in M_y$.} 
For every committee $M_y\neq M_{y^*}$, notice that the candidate $y$ belongs to $M_{y^*}$, \comment{giving $d(y,M_{y^*})=0$.} 
So, the social cost of $M_y$ with $y\ne y^*$ is upper bounded by
\begin{eqnarray*}
&&SC(M_y,\mathbf x) \\
&=&\sum_{i\in N\backslash N_{\mathbf a,y}}d(x_i,M_y)+\sum_{i\in N_{\mathbf a,y}}d(x_i,M_y)\\
&\le& \sum_{i\in N\backslash N_{\mathbf a,y}}d(x_i,M_{y^*})+\sum_{i\in N_{\mathbf a,y}}(d(x_i,y)+d(y,M_y))\\
&=&SC(M_{y^*},\mathbf x)+|N_{\mathbf a,y}|d(y,M_y).
\end{eqnarray*}
Since $2d(x_i,M_{y^*})\ge d(x_i,a_i)+d(x_i,M_{y^*})\ge d(a_i,M_{y^*})$ for every $i\in N$,   the optimal social cost is lower bounded by
\begin{align*}
SC(M_{y^*},\mathbf x)&=\sum_{i\in N}d(x_i,M_{y^*})\\
&\ge \sum_{i\in N}\frac{d(a_i,M_{y^*})}{2}\\
&=\frac 1 2 \sum_{y\in M}|N_{\mathbf a,y}|d(y,M_{y^*})\\
&=\frac 1 2|N_{\mathbf a,y^*}|d({y^*},M_{y^*}).
\end{align*}
The above two bounds give the following estimate on the ratio of the expected social cost of the committee output by the mechanism to the optimum:
\[\begin{array}{rl}
&\frac{\sum_{y\in M}P(M_{y})SC(M_{y},\mathbf x)}{SC(M_{y^*},\mathbf x)}\\
=&P(M_{y^*})+\frac{\sum_{y\in M\setminus\{y^*\}}P(M_y)SC(M_y,\mathbf x)}{SC(M_{y^*},\mathbf x)} \\
\le& P(M_{y^*})+\frac{\sum_{y\in M\setminus\{y^*\}}P(M_y)(SC(M_{y^*},\mathbf x)+|N_{\mathbf a,y}|d(y,M_y))}{SC(M_{y^*},\mathbf x)}\\
\le& 1+\frac{2\sum_{y\in M\setminus\{y^*\}}P(M_y)|N_{\mathbf a,y}|d(y,M_y)}{|N_{\mathbf a,y^*}|d({y^*},M_{y^*})},
\end{array}\]
which proves the lemma.
\end{proof}

A  natural idea to design a mechanism is making the right hand side of inequality (\ref{eq:disss}) as small as possible.
Next, we seek a suitable mechanism whose probabilities of winning set selections achieve this goal.

\vspace{2mm}\hspace{-3mm}\textbf{Mechanism 2 \comment{\sc(Power-Proportionality)}.}   Given an election $\Gamma=(\Omega,M,\mathbf a)$, for every \comment{committee $M_y\in K$}, the winning probability is
\begin{equation}\label{eq:py}
P(M_y)=\frac{{|N_{\mathbf a,y}|^{-m}d(y,M_y)^{-m}}}{\sum_{z\in M}{|N_{\mathbf a,z}|^{-m}d(z,M_z)^{-m}}}.
\end{equation}

\begin{theorem}\label{thm:pow}
\comment{\sc Power-Proportionality} is a randomized scv mechanism that is strategy-proof and has distortion at most $3-\frac{2}{m}$ for social cost objective.
\end{theorem}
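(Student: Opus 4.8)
The plan is to establish the two claimed properties separately: the distortion bound $3-\tfrac2m$ and strategy-proofness. For the distortion, the heavy lifting is already done by Lemma~\ref{lem:simi}, so the task reduces to a clean algebraic inequality. Write $c_y := |N_{\mathbf a,y}|\,d(y,M_y)$ for each $y\in M$, so that the winning probabilities read $P(M_y)=c_y^{-m}/\sum_{z\in M}c_z^{-m}$. First I would dispose of the degenerate case in which some $c_{y_0}=0$ (candidate $y_0$ receives no vote, or coincides with another candidate): then the committee $M_{y_0}$ already keeps every voter's favorite, so $SC(M_{y_0},\mathbf x)=\sum_i d(x_i,M)=OPT_c(\mathbf x)$, i.e. eliminating $y_0$ is optimal; the formula (interpreting $0^{-m}=+\infty$) concentrates all mass on such committees, giving distortion $1\le 3-\tfrac2m$. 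So assume $c_y>0$ for all $y$. Fix any consistent $\mathbf x$ and let $M_{y^*}$ be optimal. Substituting the probabilities into the bound of Lemma~\ref{lem:simi} and using $\mathbf E[SC(f(\mathbf a),\mathbf x)]=\sum_y P(M_y)SC(M_y,\mathbf x)$, I get
\[
\frac{\mathbf E[SC(f(\mathbf a),\mathbf x)]}{OPT_c(\mathbf x)}\;\le\;1+\frac{2\sum_{y\neq y^*}c_y^{1-m}}{c_{y^*}\sum_{z}c_z^{-m}} .
\]
Thus it suffices to prove, for every choice of $y^*$, that the last fraction is at most $2-\tfrac2m$, equivalently (after clearing denominators, substituting $t_y:=1/c_y$ so that $c_y^{1-m}=t_y^{m-1}$, and multiplying through by $t_{y^*}$)
\[
m\,t_{y^*}\sum_{y\neq y^*}t_y^{m-1}\;\le\;(m-1)\sum_{z\in M}t_z^m .
\]

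The core step is this inequality, and it is exactly here that the exponent $m$ in the mechanism is tailored to the $m-1$ surviving candidates. I would apply Young's inequality $uv\le \tfrac{u^p}{p}+\tfrac{v^q}{q}$ with $u=t_{y^*}$, $v=t_y^{m-1}$, $p=m$, $q=\tfrac{m}{m-1}$, which yields for each $y\neq y^*$
\[
t_{y^*}\,t_y^{m-1}\;\le\;\tfrac1m\,t_{y^*}^{m}+\tfrac{m-1}{m}\,t_y^{m} .
\]
Summing over the $m-1$ indices $y\neq y^*$ collapses the $t_{y^*}$ terms into $\tfrac{m-1}{m}t_{y^*}^m$ and gives exactly $\tfrac{m-1}{m}\sum_z t_z^m$ on the right; multiplying by $m$ finishes the inequality. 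Since this holds for every $y^*$, in particular for the optimizer, the ratio is bounded by $3-\tfrac2m$ uniformly over $\mathbf x$ and $\Gamma$. As a sanity check, Young's inequality is tight precisely when all $t_y$ are equal, matching the symmetric lower-bound instance of Proposition~\ref{prop:poplower}.

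For strategy-proofness I would argue that truthful voting minimizes the probability that a voter's own favorite is eliminated. Fix voter $i$ with true nearest candidate $a_i$ at distance $\delta_i$ and second-nearest distance $\delta_i'\ge\delta_i$. For any committee $M_y$ with $y\neq a_i$ we have $d(x_i,M_y)=\delta_i$, while $d(x_i,M_{a_i})=\delta_i'$; hence her expected cost equals $\delta_i+(\delta_i'-\delta_i)\,P(M_{a_i})$, which is nondecreasing in $P(M_{a_i})$. It remains to show that among all actions, voting $a_i$ minimizes $P(M_{a_i})$. Let $A'$ and $A$ denote the value of $c_{a_i}^{-m}$ when $i$ does and does not vote for $a_i$ (so $A'<A$, since adding a vote increases $c_{a_i}$), let $B<B_0$ be the $a_i'$-contribution with and without $i$'s vote, and let $R\ge0$ collect the remaining terms. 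Comparing
\[
P(M_{a_i})^{\text{truth}}=\frac{A'}{A'+B_0+R}\qquad\text{against}\qquad P(M_{a_i})'=\frac{A}{A+B+R},
\]
cross-multiplication reduces the claim $P(M_{a_i})^{\text{truth}}\le P(M_{a_i})'$ to $A'B-AB_0+R(A'-A)\le0$, which is immediate from $A'<A$ and $B<B_0$ (and the boundary $n_{a_i}=0$ only makes truthful voting more favorable). Thus deviating never helps, establishing strategy-proofness; polynomial running time is clear since $|K|=m$.

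I expect the algebraic inequality to be the main obstacle --- specifically, recognizing that the designed probabilities turn the Lemma~\ref{lem:simi} bound into a form where Young's inequality with the exponent pair $(m,\tfrac{m}{m-1})$ closes the gap exactly. The strategy-proofness and the degenerate $c_y=0$ cases are routine once the monotonicity of $P(M_{a_i})$ in $i$'s own vote is isolated.
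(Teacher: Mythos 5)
Your proposal is correct, and its skeleton is the same as the paper's: invoke Lemma~\ref{lem:simi}, substitute the probabilities~(\ref{eq:py}), and bound the resulting expression by $3-\frac2m$; your strategy-proofness argument (expected cost equals $\delta_i+(\delta_i'-\delta_i)P(M_{a_i})$, so it suffices that truthful voting minimizes $P(M_{a_i})$) is also the paper's argument, just carried out explicitly via cross-multiplication rather than asserted. The genuine difference is in the key algebraic step. The paper defines $g(\alpha_1,\ldots,\alpha_m)=1+\frac{2\alpha_1\sum_{i=2}^m\alpha_i^{m-1}}{\sum_{i=1}^m\alpha_i^m}$ and simply asserts, ``by the derivative of this function,'' that its maximum is attained at $\alpha_1=\cdots=\alpha_m$, giving $3-\frac2m$; no computation is shown, and verifying that the symmetric point is a global maximum of this multivariate ratio is not entirely trivial. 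You instead reduce the claim to $m\,t_{y^*}\sum_{y\neq y^*}t_y^{m-1}\le(m-1)\sum_z t_z^m$ and prove it by Young's inequality with exponents $\bigl(m,\tfrac{m}{m-1}\bigr)$, which is airtight, identifies the equality case (all $t_y$ equal, matching the lower-bound instance of Proposition~\ref{prop:poplower}), and in fact supplies the rigorous justification the paper's calculus claim is missing. You also treat the degenerate case $|N_{\mathbf a,y}|\,d(y,M_y)=0$, where the formula~(\ref{eq:py}) is literally undefined and Lemma~\ref{lem:simi} cannot be applied (its right-hand side divides by $|N_{\mathbf a,y^*}|d(y^*,M_{y^*})$); your observation that eliminating such a candidate is exactly optimal is correct and fills a gap the paper leaves open. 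In short: same architecture, but your execution of the core inequality is more rigorous than the paper's.
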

\begin{proof}
As $\sum_{y\in M}P(M_y)=1$, the probability distribution is well-defined. To see the strategy-proofness, consider any location profile $\mathbf x$ and an arbitrary voter $i$, one of whose nearest candidates \comment{being} 
 $y$. It is easy to see that, $i$ voting for $y$ \comment{(in comparison with not doing so)} increases the size of $N_{\mathbf a,y}$, and decreases the probability $P(M_y)$.
The expected cost of voter $i$ is $P(M_y)d(x_i,M_y)+(1-P(M_y))d(x_i,y)$. Since $d(x_i,y)\le d(x_i,M_y)$, the truth-telling strategy always minimizes her expected cost, which indicates the strategy-proofness.

Next, \comment{we investigate the distortion w.r.t. $\mathbf x\in\chi(\Gamma)$. By Lemma \ref{lem:simi}}, substituting the probability (\ref{eq:py}) into \comment{inequality} (\ref{eq:disss}), we have
\begin{align}
&\frac{\sum_{y\in M}P(M_{y})SC(M_{y},\mathbf x)}{SC(M_{y^*},\mathbf x)}\label{eq:tran}\\
\le& 1+\frac{2\sum_{y\in M\setminus\{y^*\}}{|N_{\mathbf a,y}|^{1-m}d(y,M_y)^{1-m}}}{|N_{\mathbf a,y^*}|d(y^*,M_{y^*})\sum_{y\in M}{|N_{\mathbf a,y}|^{-m}d(y,M_y)^{-m}}}.\nonumber
\end{align}
 Now we compute the maximum value of the right hand side in (\ref{eq:tran}) by the function $g:\mathbb R^m_+\rightarrow \mathbb R$, $$g(\alpha_1,\ldots,\alpha_m)=1+\frac{2\alpha_1\sum_{i=2}^m{\alpha_i^{m-1}}}{\sum_{i=1}^m\alpha_i^m}.$$
By the derivative of this function, we know that the maximum value is attained when $\alpha_1=\cdots=\alpha_m$, that is, $\max g(\alpha_1,\ldots,\alpha_m)=g(\alpha_1,\ldots,\alpha_1)=3-\frac{2}{m}$. The right hand side of (\ref{eq:tran}) has the same form as $g$, and it is also at most $3-\frac{2}{m}$, which gives the upper bound of the distortion.
\end{proof}


\section{Mechanisms for Maximum Social Utility}\label{sec:su}

In this section, we focus on the social utility objective. Each voter targets a favorite candidate, and takes the distance to the eliminated candidate as her utility, as she wants to stay as far away from the nuisance as possible.

 By 
 a simple adaptation to the proof of Proposition \ref{prop:poplower1}, one can easily obtain the following lower bounds for both deterministic and randomized mechanisms.
\begin{proposition}\label{prop:lower}
For the social utility objective, no deterministic (resp. randomized) scv mechanism can have a distortion smaller than $3$ (resp. $1.5$).
\end{proposition}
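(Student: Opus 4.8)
The plan is to reuse the single-winner worst case on the real line — two candidates at $y_1=0$ and $y_2=2$, with one voter voting for each — now read through the utility lens, and to pad it with $m-2$ inert candidates so that the bound holds for every $m\ge2$. The crucial observation is that for the utility objective $SU(M_y,\mathbf x)=\sum_{i}d(x_i,y)$ depends only on the \emph{eliminated} candidate $y$, so a mechanism effectively just picks one candidate to delete and is rewarded by how far (in total distance) that candidate is from the voters. I will exhibit a single action profile together with two consistent location profiles in which the ``far'' candidate is swapped, so that identical votes are compatible with opposite optimal decisions.

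Concretely, fix the votes $\mathbf a=(y_1,y_2)$ and consider the two profiles $P_1=(x_1,x_2)=(0,1)$ and $P_2=(1,2)$. Both are consistent: the voter sitting at the midpoint $1$ is equidistant (distance $1$) from $y_1$ and $y_2$ and may legitimately cast either vote. In $P_1$ one computes $SU(M_{y_1})=0+1=1$ while $SU(M_{y_2})=2+1=3$, so the optimum ($3$) eliminates $y_2$; in $P_2$ the situation is mirrored, with $SU(M_{y_2})=1$ and $SU(M_{y_1})=3$. For a deterministic mechanism, its fixed decision on $\mathbf a$ deletes one of the two candidates, and whichever it is, the profile that makes that deletion the \emph{low}-utility one forces a ratio of $3$. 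For a randomized mechanism, write $q$ for the probability it deletes $y_2$; then $P_1$ yields expected utility $1+2q$ and $P_2$ yields $3-2q$, so the distortion is at least $\max\{3/(1+2q),\,3/(3-2q)\}$, which is minimized at $q=\tfrac12$ and equals $\tfrac32$.

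To promote this to an arbitrary number of candidates, I would add $m-2$ extra candidates just to the left of $0$, at distinct points $-\delta_3,\dots,-\delta_m$ with $\delta_j\to0^+$. Keeping them outside the open segment $(0,2)$ is exactly what preserves consistency: it leaves the midpoint voter at $1$ free to vote for $y_2$, since no candidate then lies strictly within distance $1$ of it. Because each extra candidate sits near $0$, deleting it yields essentially the same utility as deleting $y_1$ — about $1$ in $P_1$ and about $3$ in $P_2$ — so it never hands the mechanism a genuinely better option, and in the limit $\delta_j\to0$ the two ratios above are unchanged. Lumping the extras together with $y_1$ as a single ``near-$0$'' block, the deterministic and randomized estimates go through verbatim and give the claimed lower bounds $3$ and $\tfrac32$.

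The step I expect to be the main obstacle — and the only place where genuine care is needed — is the placement of these $m-2$ padding candidates. Two constraints compete: they must be close enough to $0$ that eliminating one of them is never (in the limit) more profitable than the core options in \emph{both} profiles, yet they must avoid the interior $(0,2)$ so as not to become strictly nearer to the midpoint voter than the far core candidate, which would destroy the consistency of the shared action profile. Resolving this by placing them at $-\delta_j$ and passing to $\delta_j\to0$ is precisely what makes the ``simple adaptation'' work for every $m\ge2$; everything else is the two-profile bookkeeping above.
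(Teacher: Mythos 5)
Your proof is correct and matches the paper's intended argument: the paper establishes this proposition by a one-line reference to ``a simple adaptation'' of the construction in Proposition~\ref{prop:poplower1}, and your two-profile gadget (candidates at $0$ and $2$, one vote each, with the midpoint voter swapped between the two consistent location profiles) is exactly that adaptation, carried out for both the deterministic bound $3$ and the randomized bound $3/2$. You also correctly identify and resolve the one genuine subtlety the paper glosses over: the far-away padding candidates used in the social-cost construction would make every elimination's utility blow up and drive all ratios to $1$ under the utility objective, so the $m-2$ extra candidates must instead be placed just outside the segment near $0$ (at $-\delta_j$ with $\delta_j\to 0^+$), which preserves consistency of the shared action profile and loses only an $O(\delta)$ term that vanishes in the supremum defining the distortion.
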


We present in Section \ref{sec:proj} a deterministic svc mechanism with distortion 3, using a dual idea of Mechanism 1. Then, we provide in Sections \ref{sec:prop} and \ref{sec:line} randomized mechanisms for some important special metric spaces.

\subsection{Projection Mechanism}\label{sec:proj}

Recall that the \emph{projection distance} of candidate $y\in M$ on an action profile $\mathbf a=(a_i)_{i\in N}$ is $pd_{\mathbf a}(y)=\sum_{i\in N}d(a_i,y)$. We follow the dual spirit of Mechanism 1 to \comment{select a committee with the eliminated candidate  maximizing the projection distance.}

\vspace{2mm}\hspace{-3.5mm}\textbf{Mechanism 3 \sc(Max-Projection-Distance)}. Given an election $\Gamma=(\Omega,M,\mathbf a)$, the deterministic mechanism $f$ outputs committee $M_y$ where $y$ has the largest projection distance on $\mathbf a$, that is, $y\in \arg\max_{w\in M}pd_{\mathbf a}(w)$ and $f(\mathbf a)=M_y$, breaking ties arbitrarily.

\medskip The following 3-distortion performance guarantee can be proved by an argument that is completely symmetrical with the  proof of Theorem \ref{th:min}.
\begin{theorem}\label{th:max}
{\sc Max-Projection-Distance} is a deterministic {polynomial-time} $3$-distortion scv mechanism for the social utility objective.
\end{theorem}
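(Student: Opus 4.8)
The plan is to mirror the proof of Theorem~\ref{th:min}, interchanging the roles of ``mechanism output'' and ``optimum'': since the utility distortion is the ratio $OPT_u/SU$ rather than $SC/OPT_c$, I will need to upper-bound the \emph{numerator} (the optimal utility) and lower-bound the \emph{denominator} (the mechanism's utility), which is the dual of what the cost proof does. Polynomial-time computability is immediate: there are only $|K|=m$ potential committees, so one evaluates $pd_{\mathbf a}(w)$ for each of the $m$ candidates $w$ in $O(mn)$ time and eliminates the one with the largest projection distance.

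For the distortion, I would fix an election $\Gamma=(\Omega,M,\mathbf a)$ and a consistent profile $\mathbf x\in\chi(\Gamma)$, let $M_{y^*}$ be an optimal committee (so $SU(M_{y^*})=OPT_u(\mathbf x)$), and let $M_y$ be the committee output by the mechanism, so that the selection rule gives $pd_{\mathbf a}(y)\ge pd_{\mathbf a}(y^*)$. I then bound $\frac{SU(M_{y^*})}{SU(M_y)}=\frac{\sum_{i\in N}d(x_i,y^*)}{\sum_{i\in N}d(x_i,y)}$ by applying the triangle inequality to the numerator, $d(x_i,y^*)\le d(x_i,a_i)+d(a_i,y^*)$, which splits the ratio into two pieces.

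The first piece $\frac{\sum_i d(x_i,a_i)}{\sum_i d(x_i,y)}$ is at most $1$: by consistency $d(x_i,a_i)=\min_{z\in M}d(x_i,z)\le d(x_i,y)$ termwise. For the second piece $\frac{\sum_i d(a_i,y^*)}{\sum_i d(x_i,y)}=\frac{pd_{\mathbf a}(y^*)}{SU(M_y)}$, the numerator is at most $pd_{\mathbf a}(y)$ by the selection rule. The crux is a lower bound on the denominator, for which I claim $d(a_i,y)\le 2\,d(x_i,y)$ for every $i$: indeed $d(a_i,y)\le d(a_i,x_i)+d(x_i,y)$, and by consistency $d(x_i,a_i)=\min_z d(x_i,z)\le d(x_i,y)$, so summing yields $SU(M_y)\ge \tfrac12\,pd_{\mathbf a}(y)$. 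Hence the second piece is at most $\frac{pd_{\mathbf a}(y)}{\frac12\,pd_{\mathbf a}(y)}=2$, and the total ratio is at most $1+2=3$, uniformly over $\mathbf x\in\chi(\Gamma)$ and over $\Gamma$.

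Once the two estimates are set up the computation is mechanical, so there is no single technically hard step; the only point requiring genuine care is orienting the inequalities correctly. In the cost proof the factor $2$ came from lower-bounding the optimum via $2d(x_i,Y^*)\ge d(a_i,Y^*)$, whereas here the identical triangle-inequality argument must instead lower-bound the \emph{mechanism's} utility $SU(M_y)$. I therefore expect the main thing to verify is that the selection rule (largest projection distance) makes $pd_{\mathbf a}(y^*)\le pd_{\mathbf a}(y)$ point in the useful direction — which it does precisely because utility distortion divides by the mechanism's value rather than by the optimum.
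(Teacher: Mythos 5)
Your proof is correct and is essentially identical to the paper's own argument (which the paper describes as ``completely symmetrical with the proof of Theorem~\ref{th:min}''): the same triangle-inequality split $d(x_i,y^*)\le d(x_i,a_i)+d(a_i,y^*)$, the same consistency bound $d(x_i,a_i)\le d(x_i,y)$ for the first piece, and the same factor-2 bound $SU(M_y)\ge\tfrac12 pd_{\mathbf a}(y)$ combined with the selection rule $pd_{\mathbf a}(y^*)\le pd_{\mathbf a}(y)$ for the second. No gaps; the orientation of the inequalities, which you rightly flag as the only delicate point, is handled correctly.
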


This 3-distortion scv mechanism is the best that one can expect for deterministic mechanisms, in view of Proposition \ref{prop:lower}. In contrast to Mechanism 1, it is not strategy-proof: When a voter has two favorite candidates and votes for them respectively, resulting in different action profiles, the corresponding outputs of \comment{\sc Max-Projection-Distance} may be two candidates that have different distances to her. Therefore, to maximize her utility, she has to vote for the specific candidate who leads to a better outcome.

\subsection{Proportionality Mechanism}\label{sec:prop}

A natural idea for randomization is selecting a committee in $K$ with a probability proportional to the number of voters who vote for it. We show the strategy-proofness, and evaluate the distortion in the two-candidate case and  simplex case.

Recall that $N_{\mathbf a,y}=\{i\in N|a_i=y\}$ is the set of \comment{voters who vote for} the candidate $y\in M$. 

\vspace{2mm}\hspace{-3.5mm}\textbf{Mechanism 4 \comment{\sc(Proportionality)}.} Given an election $\Gamma=(\Omega,M,\mathbf a)$, for each committee $M_y$, $y\in M$, the winning probability is $$P(M_y)=\frac{n-|N_{\mathbf a,y}|}{(m-1)n}.$$

\smallskip Note that the probability distribution is well-defined, as the sum of $n-|N_{\mathbf a,y}|$ \comment{over} $y\in M$ is $(m-1)n$. 

\begin{lemma}
\comment{\sc Proportionality} is strategy-proof.
\end{lemma}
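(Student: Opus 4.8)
The plan is to fix a single voter $i$ together with the actions $\mathbf a_{-i}$ of everyone else, and to show that, among all the votes $c\in M$ that $i$ could cast, her expected utility is maximized precisely when $c$ is a nearest candidate (so truth-telling is optimal). Recall that under the social utility objective, when the committee $M_y$ is selected the utility of voter $i$ is $d(x_i,y)$. Hence, if $i$ votes for $c$ — yielding the profile $\mathbf a=(c,\mathbf a_{-i})$ — her expected utility is $U_i(c)=\sum_{y\in M}P(M_y)\,d(x_i,y)$, and the whole argument reduces to tracking how the probabilities $P(M_y)$ depend on $c$.

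First I would record the effect of $i$'s vote on the counts. Writing $n_y$ for the number of the \emph{other} voters (those in $N\setminus\{i\}$) who vote for $y$, casting a vote for $c$ makes $|N_{\mathbf a,c}|=n_c+1$ while $|N_{\mathbf a,y}|=n_y$ for every $y\neq c$. Consequently, under Mechanism 4,
\[
P(M_c)=\frac{n-n_c-1}{(m-1)n},\qquad P(M_y)=\frac{n-n_y}{(m-1)n}\ \ (y\neq c).
\]
The crucial observation is that only the coordinate corresponding to $i$'s own vote is shifted, and it is shifted by the fixed amount $\tfrac{1}{(m-1)n}$ regardless of which candidate $c$ she names.

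The key step is then a clean decomposition of the expected utility. Substituting the probabilities above gives
\[
U_i(c)=\underbrace{\sum_{y\in M}\frac{n-n_y}{(m-1)n}\,d(x_i,y)}_{=:C}\;-\;\frac{1}{(m-1)n}\,d(x_i,c),
\]
where $C$ depends only on $x_i$ and on $\mathbf a_{-i}$ (through the fixed counts $n_y$), but not on $i$'s chosen action $c$. Therefore maximizing $U_i(c)$ over $c\in M$ is equivalent to minimizing $d(x_i,c)$, which is achieved exactly by voting for a nearest candidate; and any one of the nearest candidates attains this minimum. Since this holds for every fixed $\mathbf a_{-i}$, truth-telling is a (weakly) dominant action for $i$, establishing strategy-proofness.

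The argument presents no real obstacle; the only point requiring care is the bookkeeping of the counts — confirming that changing $i$'s vote alters a single term $|N_{\mathbf a,y}|$ (by one) and leaves the constant part $C$ untouched, so that the entire dependence of $U_i(c)$ on $i$'s action collapses to the single negative term $-\tfrac{1}{(m-1)n}\,d(x_i,c)$. I would also note in passing that the denominator $(m-1)n$ is a positive constant and $n-n_y\ge 0$ for all choices of $c$, so the distribution remains well-defined throughout, which is immediate.
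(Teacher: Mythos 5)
Your proof is correct and rests on essentially the same observation as the paper's: voter $i$'s ballot moves exactly $\tfrac{1}{(m-1)n}$ of probability onto the committee that excludes the candidate she names, and leaves all other probabilities untouched. The paper packages this as a pairwise swap argument (switching from the favorite $y$ to any $y'$ raises $P(M_y)$ and lowers $P(M_{y'})$ by the same amount, changing the expected utility by $\tfrac{1}{(m-1)n}\bigl(d(x_i,y)-d(x_i,y')\bigr)\le 0$), whereas your closed form $U_i(c)=C-\tfrac{1}{(m-1)n}\,d(x_i,c)$ is the same computation written globally over all possible actions at once.
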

\begin{proof}
Consider an arbitrary voter $i\in N$, and suppose \comment{$y\in M$} is her favorite candidate. If \comment{voter} $i$ switches her action from $y$ to any other $y'\in M_y$, then the probability $P(M_y)$ increases,  $P(M_{y'})$ decreases, and  all other probabilities remain the same. The expected utility of voter $i$ is $P(M_y)d(x_i,y)+P(M_{y'})d(x_i,y')+U$ with a fixed value $U$. Since $d(x_i,y)\le d(x_i,y')$,  this implies that the expected utility is non-increasing by switching from $y$ to $y'$. Therefore, being truthful is the optimal strategy, regardless of the actions of other \comment{voter}s.
\end{proof}

By an analysis similar to the proof of Lemma \ref{lem:simi}, we obtain a lower bound on the ratio between the expected social utility of the selection and the optimal utility.

\begin{lemma}\label{lem:alter}
Given \comment{a single-winner election $\Gamma=(\Omega,M,\mathbf a)$} and location profile \comment{$\mathbf x\in\chi(\Gamma)$, suppose $M_{y^*}$} is an optimal committee. For any randomized mechanism that selects $M_y$ ($y\in M$) as winning committee with probability $P(M_y)$, the expected social utility satisfies
\begin{eqnarray*}
&&\frac{\sum_{y\in M}P(M_y)SU(M_y)}{SU(M_{y^*})}\\
&\ge& 1-\sum_{y\in M_{y^*}}{P(M_y)}\left({1+\frac{\sum_{z\in M}|N_{\mathbf a,z}|d(z,y)}{2(n-|N_{\mathbf a,y^*}|)d(y,y^*)}}\right)^{-1}.
\end{eqnarray*}
\end{lemma}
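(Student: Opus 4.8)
The plan is to mirror the proof of Lemma~\ref{lem:simi} in dual form: establish a per-candidate comparison of $SU(M_y)$ with the optimum and then average against the weights $P(M_y)$. First I would exploit $\sum_{y\in M}P(M_y)=1$ to rewrite the target ratio as
\[
\frac{\sum_{y\in M}P(M_y)SU(M_y)}{SU(M_{y^*})}=1-\sum_{y\in M_{y^*}}P(M_y)\Bigl(1-\tfrac{SU(M_y)}{SU(M_{y^*})}\Bigr),
\]
the new sum ranging over $y\in M_{y^*}=M\setminus\{y^*\}$ because the $y=y^*$ summand vanishes. Since each $P(M_y)\ge 0$, it then suffices to prove, for every $y\ne y^*$, the single-candidate inequality $1-\tfrac{SU(M_y)}{SU(M_{y^*})}\le(1+A_y)^{-1}$ with $A_y:=\tfrac{\sum_{z\in M}|N_{\mathbf a,z}|d(z,y)}{2(n-|N_{\mathbf a,y^*}|)d(y,y^*)}$, and then sum term by term. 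This reduction is the key structural step; the rest is triangle-inequality bookkeeping.

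Clearing denominators (valid since $A_y\ge 0$), the per-candidate inequality is equivalent to $SU(M_y)\ge A_y\bigl(SU(M_{y^*})-SU(M_y)\bigr)$, which I would obtain from two estimates. For the first, consistency of $\mathbf x$ gives $d(x_i,a_i)=\min_{w\in M}d(x_i,w)\le d(x_i,y)$, so the triangle inequality yields $d(a_i,y)\le d(x_i,a_i)+d(x_i,y)\le 2d(x_i,y)$; summing over $i\in N$ and grouping voters by their action gives $SU(M_y)\ge\tfrac12\sum_{i\in N}d(a_i,y)=\tfrac12\sum_{z\in M}|N_{\mathbf a,z}|d(z,y)$. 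For the second, I would write $SU(M_{y^*})-SU(M_y)=\sum_{i\in N}\bigl(d(x_i,y^*)-d(x_i,y)\bigr)$ and split on membership in $N_{\mathbf a,y^*}$: for $i\in N_{\mathbf a,y^*}$ the candidate $y^*$ is a nearest one, so $d(x_i,y^*)-d(x_i,y)\le 0$, while each of the remaining $n-|N_{\mathbf a,y^*}|$ voters contributes at most $d(y,y^*)$ by the triangle inequality, whence $SU(M_{y^*})-SU(M_y)\le(n-|N_{\mathbf a,y^*}|)d(y,y^*)$.

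Combining the two estimates gives $A_y\bigl(SU(M_{y^*})-SU(M_y)\bigr)\le A_y(n-|N_{\mathbf a,y^*}|)d(y,y^*)=\tfrac12\sum_{z\in M}|N_{\mathbf a,z}|d(z,y)\le SU(M_y)$, which is exactly the desired per-candidate inequality, and averaging against $P(M_y)$ finishes the lemma. The only delicate point I foresee is the well-definedness of $A_y$ in the boundary cases $d(y,y^*)=0$ or $N_{\mathbf a,y^*}=N$. I expect the clean way to dispatch these is to observe that the single-candidate inequality is non-trivial only when its left-hand bracket is strictly positive, i.e. when $SU(M_y)<SU(M_{y^*})$; in that regime the second estimate already forces $(n-|N_{\mathbf a,y^*}|)d(y,y^*)\ge SU(M_{y^*})-SU(M_y)>0$, so both factors in the denominator of $A_y$ are positive and the main chain applies, whereas whenever the bracket is nonpositive the inequality holds trivially because its right-hand side is nonnegative.
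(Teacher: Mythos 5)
Your proof is correct and is exactly the argument the paper intends: the paper omits the proof of Lemma~\ref{lem:alter}, saying only that it follows by an analysis similar to that of Lemma~\ref{lem:simi}, and your two estimates are precisely the dual analogues of the two bounds used there (the factor-2 consistency bound $SU(M_y)\ge\tfrac12\sum_{z\in M}|N_{\mathbf a,z}|d(z,y)$ playing the role of the lower bound on the optimum, and the triangle-inequality bound $SU(M_{y^*})-SU(M_y)\le(n-|N_{\mathbf a,y^*}|)d(y,y^*)$ playing the role of the per-committee gap bound). Your handling of the degenerate cases $d(y,y^*)=0$ or $N_{\mathbf a,y^*}=N$ is a small extra point of care that the paper glosses over.
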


With the help of 
Lemma \ref{lem:alter}, we  {can} upper bound  the distortion of {\sc Proportionality} in the 2-candidate case  \comment{({\em i.e.}, $m=2$)} and simplex case. 
We say the candidates form a {\em simplex}, if the distance between any two candidates is the same, say 2, \emph{i.e.},  $d(y,z)=2$  for all distinct $y,z\in M$.
\footnote{The simplex is studied in \cite{anshelevich2017randomized} 
\comment{for single-winner election, under some additional assumption on distances.}} 

\begin{theorem}\label{thm:pmech}
For the social utility objective, {\sc Proportionality}  has distortion\\
(i)  at most {$1.523$} when $m=2$;\\
(ii)  at most $3-\frac{4}{m+2}$ {when candidates form a simplex}.
\end{theorem}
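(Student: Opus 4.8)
The plan is to feed the winning probabilities of \textsc{Proportionality} into Lemma~\ref{lem:alter} and then optimize the resulting bound over all possible vote distributions. Writing $s_y:=n-|N_{\mathbf a,y}|$, so that $P(M_y)=s_y/((m-1)n)$, Lemma~\ref{lem:alter} gives, for any consistent $\mathbf x$ with optimal loser $y^*$,
\[
\frac{\mathbf E[SU(f(\mathbf a),\mathbf x)]}{OPT_u(\mathbf x)}\ \ge\ 1-\Phi,\qquad \Phi:=\sum_{y\ne y^*}\frac{s_y}{(m-1)n}\Big(1+\tfrac{\sum_{z\in M}|N_{\mathbf a,z}|d(z,y)}{2 s_{y^*}\,d(y,y^*)}\Big)^{-1}.
\]
Since the distortion is $\sup_{\mathbf x} OPT_u(\mathbf x)/\mathbf E[SU(f(\mathbf a),\mathbf x)]$, it suffices to upper bound the ``loss term'' $\Phi$ over all action profiles; this yields $Dist(f)\le 1/(1-\sup\Phi)$.

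For part~(i), $m=2$ leaves a single summand. Taking $y^*=y_1$ without loss of generality, the numerator $\sum_z|N_{\mathbf a,z}|d(z,y_2)$ equals $|N_{\mathbf a,y_1}|\,d(y_1,y_2)$ while $s_{y^*}=|N_{\mathbf a,y_2}|$, so after cancelling $d(y_1,y_2)$ the inner factor is $\big(1+\tfrac{|N_{\mathbf a,y_1}|}{2|N_{\mathbf a,y_2}|}\big)^{-1}$. Setting $t=|N_{\mathbf a,y_1}|/n$ collapses $\Phi$ to the one-variable function $\frac{2t(1-t)}{2-t}$ on $[0,1]$. A derivative computation locates its maximum at $t=2-\sqrt2$, with value $6-4\sqrt2$; hence $1-\Phi\ge 4\sqrt2-5$ and $Dist(f)\le \frac{1}{4\sqrt2-5}=\frac{5+4\sqrt2}{7}<1.523$.

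For part~(ii) I substitute the simplex distances $d(z,y)=2$ $(z\ne y)$ and $d(y,y^*)=2$. Then $\sum_z|N_{\mathbf a,z}|d(z,y)=2s_y$, the inner factor becomes $\frac{2s_{y^*}}{2s_{y^*}+s_y}$, and
\[
\Phi=\sum_{y\ne y^*}\frac{s_y}{(m-1)n}\cdot\frac{2s_{y^*}}{2s_{y^*}+s_y},\qquad 0\le s_y\le n,\ \ \textstyle\sum_{y\in M}s_y=(m-1)n.
\]
The key observation is that for a fixed value of $s_{y^*}$ the map $s\mapsto \frac{s}{2s_{y^*}+s}$ is increasing and concave, so by Jensen's inequality $\Phi$ is largest when the remaining mass $(m-1)n-s_{y^*}$ is spread equally among the other $m-1$ candidates. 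This reduces $\Phi$ to a one-variable function of $u:=s_{y^*}/n\in[0,1]$, which I then maximize by calculus (respecting $u\le1$); controlling this maximum gives $\sup\Phi\le \frac{2m}{3m+2}$ and hence $Dist(f)\le 3-\frac{4}{m+2}$.

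The main obstacle is the simplex optimization: one must justify that the balanced configuration is genuinely extremal (the concavity/Jensen step, plus checking that the box constraints $0\le s_y\le n$ are not active at the optimum) and then control the single-variable maximum over $u$. I would also verify that the reduction is consistent with $M_{y^*}$ being the \emph{true} optimal committee; since optimality only shrinks the feasible set of profiles, the bound remains valid, and it is worth confirming that $\frac{2m}{3m+2}$ is an honest (though not necessarily tight) upper estimate of $\sup\Phi$ --- consistent with the fact that for $m=2$ the sharper analysis of part~(i) already beats the value $3-\frac{4}{m+2}=2$. Strategy-proofness of \textsc{Proportionality} is already established in the preceding lemma, so only these distortion bounds remain.
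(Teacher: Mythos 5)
Your proposal is correct and takes essentially the same route as the paper: part (i) matches the paper's own proof (Lemma~\ref{lem:alter} with the \textsc{Proportionality} probabilities, reduced to a one-variable optimization giving distortion $(5+4\sqrt{2})/7<1.523$), and part (ii) is exactly the ``similar but more involved analysis'' that the paper relegates to its supplementary material. Your one unverified claim, $\sup\Phi\le\frac{2m}{3m+2}$, does hold: the interior critical point of your single-variable function yields $\Phi=\frac{2(m-1)}{(1+\sqrt{2(m-1)})^2}$ (it lies in $[0,1]$ only when $m\le 4$), the boundary $u=1$ yields $\frac{2(m-2)}{3m-4}$, and both quantities are at most $\frac{2m}{3m+2}$, so the distortion bound $3-\frac{4}{m+2}$ follows.
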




\begin{proof}
(i) For any election $\Gamma$ and consistent location profile $\mathbf x\in\chi(\Gamma)$, suppose $y$ is the optimal candidate (singleton committee), and \comment{$y^*$} is the other one. By Lemma \ref{lem:alter}, we have
\begin{eqnarray*}
&&\frac{P(y)SU(y)+P(y^*)SU(y^*)}{SU(y)}\\
&\ge& 1-P(y^*)\left({1+\frac{|N_{\mathbf a,y^*}|}{2(n-|N_{\mathbf a,y^*}|)}}\right)^{-1}\\
&=&1-\frac{|N_{\mathbf a,y^*}|}{n}\left({1+\frac{|N_{\mathbf a,y^*}|}{2(n-|N_{\mathbf a,y^*}|)}}\right)^{-1}\\
&=&1-\left({\frac{n}{|N_{\mathbf a,y^*}|}+\frac{n}{2(n-|N_{\mathbf a,y^*}|)}}\right)^{-1}\\
&\ge& 1-\left(1.5+\sqrt2\right)^{-1}
\end{eqnarray*}
Therefore, the distortion is at most $(1-(1.5+\sqrt2)^{-1})^{-1}=(5+4\sqrt2)/7=1.5224...$

(ii) can be proved in a similar but more involved analysis, which is relegated to Supplementary Material.
\end{proof}


\subsection{Mechanisms on the Real Line}\label{sec:line}

We now consider the case where all voters and candidates are located on the real line, and the metric is defined as the Euclidean distance. This setting simulates 
the scenario in which an authority wants to build a facility on a street, and has been extensively studied for obnoxious facility games. 
The results of \cite{cheng2011} implies that  an optimal committee must eliminate one of the two endpoints of the line segment spanned by $y_1,\ldots,y_m$. This nice fact directly provides a randomized strategy-proof 2-distortion mechanism that eliminates the leftmost candidate and the rightmost candidate with probability $\frac 1 2$, respectively. Next, we improve the distortion by a more involved probability distribution of selection, at a cost of losing the strategy-proofness.

\vspace{2mm}\hspace{-3.5mm}\textbf{Mechanism 5 \comment{\sc(Left-or-Right)}.} Given an election \comment{$\Gamma=(\mathbb R,M,\mathbf a)$}, where the leftmost and rightmost candidate are located at \comment{$y_1=0$ and $y_m=L$, respectively}. Denote by $n_1,n_2$ the number of voters whose actions are on $[0,\frac L 2], (\frac L 2, L]$ , respectively. Select $M_{y_i}$ with probability $P(M_{y_i})$, $i=1,m$, as specified below:
\begin{itemize}
\item[$\bullet$] If $n_1>n_2$, then $P(M_{y_1})=\frac 6 {13}$ and $P(M_{y_m})=\frac 7 {13}$.
\item[$\bullet$] If $n_1<n_2$, then $P(M_{y_1})=\frac7 {13}$ and $P(M_{y_m})=\frac 6 {13}$.
\item[$\bullet$] If $n_1=n_2$, then $P(M_{y_1})=P(M_{y_m})=\frac 1 2$.
\end{itemize}

\begin{theorem}\label{thm:line}
{\sc Left-or-Right} is a randomized $\frac{13}{7}$-distortion scv mechanism for the social utility objective.
\end{theorem}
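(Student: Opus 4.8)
The plan is to reduce the analysis to the two endpoint committees $M_{y_1}$ and $M_{y_m}$, establish one sharp per-voter inequality, and then close the bound separately in each of the mechanism's three branches. First I would invoke the cited fact from \cite{cheng2011}: since $SU(M_y)=\sum_{i\in N}|x_i-y|$ is a convex function of the eliminated location $y$ and every candidate lies in $[0,L]$ with $y_1=0$, $y_m=L$, the maximum over candidates is attained at an endpoint, so $OPT_u(\mathbf x)=\max\{A,B\}$ where $A:=SU(M_{y_1})=\sum_i p_i$ and $B:=SU(M_{y_m})=\sum_i q_i$, writing $p_i:=d(x_i,y_1)=|x_i|$ and $q_i:=d(x_i,y_m)=|x_i-L|$. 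Because {\sc Left-or-Right} only ever eliminates $y_1$ or $y_m$, the expected utility is the convex combination $P(M_{y_1})A+P(M_{y_m})B$, and it suffices to show $\max\{A,B\}\le\tfrac{13}{7}\big(P(M_{y_1})A+P(M_{y_m})B\big)$.

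The heart of the argument is a bound coupling the half-segment containing $a_i$ to the pair $(p_i,q_i)$. For a voter with $a_i\in[0,\tfrac L2]$, I would use that $a_i$ is a nearest candidate, so $|x_i-a_i|\le|x_i-y_m|$; together with $a_i\le\tfrac L2$ this forces $x_i\le\tfrac{3L}{4}$, and a one-line computation then gives $7q_i-p_i\ge L$. Symmetrically, for $a_i\in(\tfrac L2,L]$ one gets $x_i\ge\tfrac L4$ and hence $p_i-7q_i\le L$. Summing the first bound over the $n_1$ ``left'' voters and the second over the $n_2$ ``right'' voters produces the single clean inequality $A-7B=\sum_i(p_i-7q_i)\le L(n_2-n_1)$, and by the mirror argument $B-7A\le L(n_1-n_2)$.

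With these two inequalities in hand I would finish by a case analysis matching the mechanism's branches. When $n_1>n_2$ the first inequality gives $A<7B$, and with weights $(P(M_{y_1}),P(M_{y_m}))=(\tfrac{6}{13},\tfrac{7}{13})$ both subcases close: if $B\ge A$ the ratio equals $\tfrac{13B}{6A+7B}\le\tfrac{13}{7}$ since $A\ge0$, while if $A>B$ then $6A+7B>6A+A=7A$ forces the ratio $\tfrac{13A}{6A+7B}<\tfrac{13}{7}$. The case $n_1<n_2$ is the exact mirror image using $B<7A$ and weights $(\tfrac{7}{13},\tfrac{6}{13})$. In the tie case $n_1=n_2$ the two inequalities give $A\le7B$ and $B\le7A$, i.e. $\max\{A,B\}\le7\min\{A,B\}$, so with equal weights the ratio $\tfrac{2\max\{A,B\}}{A+B}$ is at most $\tfrac74<\tfrac{13}{7}$.

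I expect the main obstacle to be pinning down the sharp constant in the per-voter inequality and confirming it is compatible with the chosen weights: the threshold $\tfrac L2$ yields $x_i\le\tfrac{3L}{4}$ (resp. $x_i\ge\tfrac L4$), which is exactly what makes the coefficient equal to $7$ rather than something larger, and $7$ is in turn precisely the value for which $\tfrac{6}{13},\tfrac{7}{13}$ give distortion $\tfrac{13}{7}$. As a sanity check I would verify the extreme configuration in which all voters sit at one endpoint (so one of $A,B$ vanishes and $n_1,n_2$ are $0$ and $n$): there the ratio equals $\tfrac{13}{7}$ exactly, showing the bound is tight for this mechanism and that no slack was lost in the per-voter step.
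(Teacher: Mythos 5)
Your proof is correct. It rests on the same two ingredients as the paper's proof --- the reduction (via convexity, i.e.\ the cited result of \cite{cheng2011}) to the two endpoint committees $M_{y_1},M_{y_m}$, and the geometric fact that a voter whose action lies in $[0,\frac L2]$ is at distance at most $\frac{3L}{4}$ from $y_1$ (symmetrically for the right half) --- but the bookkeeping is genuinely different. The paper first moves all voters into $[0,L]$, then exploits the identity $SU(M_{y_1},\mathbf x)+SU(M_{y_m},\mathbf x)=Ln$, bounds $OPT_u(\mathbf x)$ against $Ln$ (e.g.\ $OPT_u(\mathbf x)<\frac{7}{8}Ln$ when $n_1>n_2$ and $M_{y_1}$ is optimal), and splits on which endpoint is optimal, using an infinitesimal-$\epsilon$ argument along the way. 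You instead sum a per-voter linear inequality ($7q_i-p_i\ge L$ for left voters, $7p_i-q_i\ge L$ for right voters) into the two coupling inequalities $A-7B\le L(n_2-n_1)$ and $B-7A\le L(n_1-n_2)$, after which each branch of the mechanism closes by pure algebra in $A$ and $B$ alone, with no reference to $Ln$. Your route buys three things: it handles voters located outside $[0,L]$ directly, so the paper's preliminary reduction step is not needed; it dispenses with the $\epsilon$ infinitesimals; and it makes transparent where the constant $7$ and the weights $\frac{6}{13},\frac{7}{13}$ come from, with your all-voters-at-one-endpoint example certifying tightness. The paper's aggregate computation is slightly more direct once its reduction is granted, but the two proofs are of comparable length and encode the same worst-case geometry.
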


\begin{proof}
For any election $\Gamma=(\mathbb R,M,\mathbf a)$ and consistent location profile \comment{$\mathbf x\in\chi(\Gamma)$}, we show that the  \comment{performance ratio $\frac{OPT_u(\mathbf x)}{\mathbf E[SU(f(\mathbf a),\mathbf x)]}$} is upper bounded by $\frac{13}{7}$\comment{, where $f$ denotes mechanism {\sc Left-or-Right}}. It is easy to see that the worst case w.r.t. the performance ratio must occur when all voters are also located on interval $[0,L]$. \comment{(If some $x_i$ is smaller than 0 or larger than $L$, then changing it to 0 or $L$ would not decrease the ratio.) So we assume that} $x_i\in[0,L]$ \comment{for all $ i\in N$, and} only consider the line segment $[0,L]$.

If $n_1=n_2$, then the expected social utility of the outcome is
\begin{align*}
\begin{array}{rcl}
\mathbf E[SU(f(\mathbf a))]&=&\frac 1 2SU(M_{y_1})+\frac 1 2SU(M_{y_m})\vspace{1mm}\\
&=&\frac 1 2\sum_{i=1}^nx_i+\frac 1 2\sum_{i=1}^n(L-x_i)=\frac{Ln}{2},
\end{array}
\end{align*}
and the optimal social utility is $OPT_u(\mathbf x)=\max\{SU(M_{y_1}),SU(M_{y_m})\}$.  Since $n_1=n_2$, we have
\[\begin{array}{c}SU(M_{y_1})=\sum_{i=1}^nx_i\le \frac 3 4Ln_1+Ln_2=\frac{7Ln}{8},\end{array}\] where $SU(M_{y_1})$ reaches the upper bound when $n_1$ voters who vote for the midpoint candidate $\frac L 2$ are located at $\frac{3L}{4}$, and $n_2$ voters who vote for $y_m=L$ are located at $L$. Similarly, we have \[\begin{array}{c}SU(M_{y_m})\le \frac{7Ln}{8}.\end{array}\]
Therefore, we have $OPT_u(\mathbf x)\le \frac 7 4\mathbf E[SU(f(\mathbf a))]<\frac {13}7\mathbf E[SU(f(\mathbf a))]$ as desired.

  When $n_1\neq n_2$, by symmetry, we only discuss the case $n_1>n_2$. Recall that an optimal solution eliminates either $y_1$ or $y_m$. First, if the optimal committee is $M_{y_1}$, we also have $OPT_u(\mathbf x)=\sum_{i=1}^nx_i\le \frac 3 4Ln_1+Ln_2$ by the same reasoning as above. In turn, $\frac34n_1+n_2=n-\frac{n_1}4<\frac78n$ gives  $OPT_u(\mathbf x)<\frac{7Ln}{8}$. It follows that
\begin{align*}
\begin{array}{rcl}
\mathbf E[SU(f(\mathbf a))]&=&\frac{6}{13}\sum_{i=1}^nx_i+\frac{7}{13}\sum_{i=1}^n(L-x_i)\vspace{1mm}\\
&=&\frac{7}{13}Ln-\frac{1}{13}\sum_{i=1}^nx_i\nonumber\vspace{1mm}\\
&>& \frac{8}{13}OPT_u(\mathbf x)-\frac{1}{13}OPT_u(\mathbf x)\vspace{1mm}\\
&=&\frac{7}{13}OPT_u(\mathbf x).
\end{array}
\end{align*}

Next, if the optimal committee is $M_{y_m}$, with $\epsilon>0$ being infinitesimal, we have \[\begin{array}{c}OPT_u(\mathbf x)=\sum_{i=1}^n(L-x_i)\le Ln_1+\frac 3 4Ln_2-\epsilon< Ln,\end{array}\]
 where the first inequality  holds with equality when $n_2$ voters who vote for \comment{$\frac L 2+\epsilon'$} are located at \comment{$\frac{L}{4}+\epsilon'$ ($\epsilon'>0$ being infinitesimal)}, and $n_1$ \comment{voter}s who vote for $y_1=0$ are located at $0$. Therefore, 
\begin{align*}
\begin{array}{rcl}
\mathbf E[SU(f(\mathbf a))]&=&\frac{6}{13}\sum_{i=1}^nx_i+\frac{7}{13}\sum_{i=1}^n(L-x_i)\vspace{1mm}\\
&=&\frac{6}{13}Ln+\frac{1}{13}\sum_{i=1}^n(L-x_i)\nonumber\vspace{1mm}\\
&>& \frac{6}{13}OPT_u(\mathbf x)+\frac{1}{13}OPT_u(\mathbf x)\vspace{1mm}\\
&=&\frac{7}{13}OPT_u(\mathbf x).
\end{array}\end{align*}
\comment{The proof is complete.}
\end{proof}



\section{Concluding Remarks}\label{sec:con}\label{sec:add}

In this paper we are concerned with the \emph{scv} mechanisms for single-loser election, instead of \emph{ranking} mechanisms that ask the ordinal preferences of voters.
We study  how well, in terms of minimizing (maximizing) social cost (utility), the mechanisms that only receive the information on top-ranked candidates can compete with omniscient selection rules. From the worst-case perspective, our results show that accessing the very limited information is often enough, in view that the performance guarantees of the mechanisms we propose match the lower bounds which hold even when ranking preferences are known.

\paragraph{Extension.} The good performances of scv mechanisms can be extended to a more general task: 
selecting a size-$k$ committee $W$ as winners for a predetermined integer $k\le m-1$.
The voters may take the distance to the winners' set $W$ as their costs, or take the distance to the losers' set $M\backslash W$ as their utilities. For the social cost (utility) objective, a couple of ideas and results presented in Sections \ref{sec:sc} and \ref{sec:su} can be generalized. Specifically, we obtain the following lower bounds (LB) and upper bounds (UB) on the distortions of scv mechanisms for selecting a size-$k$ committee:
\begin{table}[htbp]
  \centering

    \begin{tabular}{|p{2cm}<{\centering}|p{2.12cm}<{\centering}|p{2cm}<{\centering}|}
    \hline
    \small Objective &  \small Deterministic &\small Randomized  \\
    \hline
    {\small Social cost} & {\small ${\rm LB}={\rm UB}=3$ } & {\small LB: $2$ } \\

                                       \hline
   {\small{Social utility}} & {\small ${\rm LB}={\rm UB}=3$ } & {\small LB: 1.5}\\
    \hline
    \end{tabular}
  \label{tab:1}
\end{table}

 \noindent The upper bound  3 for the social cost (utility) objective is guaranteed by outputting a size-$k$ committee that minimizes the projection distance (whose complement set maximizes the projection distance). More details could be found in Supplementary Material.

\paragraph{Future direction.} Although strategy-proof mechanisms for the single-winner and single-loser voting have been explored more or less, for the general problem of selecting a size-$k$  committee by scv rules, so far, to the best of our knowledge, there is no performance-guaranteed mechanism that is strategy-proof, even for $k=2$. This suggests an interesting research direction for scv mechanism design. 
Except for the proportional idea employed by Mechanism 2 and 4, the quadratic proportionality \cite{meir2012algorithms,anshelevich2017randomized} or other proportional probabilities relying on $k$ may be useful.

\bibliography{reference}

\end{document}